\newtheorem{theorem}{Theorem}[section]
\newtheorem*{theorem-non}{Theorem}
\newtheorem{ass}[theorem]{Assumption}
\newtheorem{lemma}[theorem]{Lemma}
\newcommand{\simiid}{\stackrel{iid}{\sim}}
\newcommand{\nonstop}{$\phi_t$ \KwCon }
\newcommand{\RouLoop}{$k$ \KwTo $\{1,2,\cdots,R\}$}
\newcommand{\BatLoop}{$i$ \KwTo $\{1,2,\cdots,B\}$}
\newcommand{\Addloop}{$\ell$ \KwTo $\{{\underline{m}}+1,{\underline{m}}+2,\cdots,\ell_i$\}}
\newcommand{\mbe}{\mathbb{E}}
\newcommand{\mco}{\mathcal{O}}
\newcommand{\mlmc}{\mathrm{mlmc}}
\title{Leveraging Nested MLMC for Sequential Neural Posterior Estimation with Intractable Likelihoods}
\author{Xiliang Yang$^1$, Yifei Xiong$^2$,  Zhijian He$^1$\thanks{Corresponding author: hezhijian@scut.edu.cn}}
\date{%
    $^1$School of Mathematics, South China University of Technology\\%
    $^2$Department of Statistics, Purdue University\\
}
\begin{document}

\maketitle

\begin{abstract}
There is a growing interest in studying sequential neural posterior estimation (SNPE) techniques due to their advantages for simulation-based models with intractable likelihoods. The methods aim to learn the posterior from adaptively proposed simulations using neural network-based conditional density estimators. As an SNPE technique, the automatic posterior transformation (APT) method proposed by Greenberg et al. (2019) performs well and scales to high-dimensional data. However, the APT method requires computing the expectation of the logarithm of an intractable normalizing constant, i.e., a nested expectation. Although atomic proposals were used to render an analytical normalizing constant, it remains challenging to analyze the convergence of learning. In this paper, we reformulate APT as a nested estimation problem.  Building on this, we construct several multilevel Monte Carlo (MLMC) estimators for the loss function and its gradients to accommodate different  scenarios, including
two unbiased estimators, and a biased estimator that trades a small bias for reduced variance and controlled runtime and memory usage. 
We also provide convergence results of
stochastic gradient descent to quantify the interaction of  the bias and variance of the gradient estimator. 
Numerical experiments for approximating complex posteriors with multimodality in moderate dimensions are provided to examine the effectiveness of the proposed methods. 
\end{abstract}

\section{Introduction} \label{section:intro}
Simulator-based models are widely used across scientific disciplines, including neuroscience \cite{paninski2018neural}, physics \cite{brehmer2020mining, gonccalves2020training}, biology 
\cite{hashemi2023amortized,li2023biological, olutola2023systems}, and inverse graphics \cite{romaszko2017vision}. These models serve as crucial tools for describing and understanding the data-generating process. However, when applying traditional Bayesian inference to simulator-based models, challenges arise, including an intractable likelihood function $p(x|\theta)$ and the computational expense associated with running the simulator.

To address these challenges, a series of likelihood-free Bayesian computation (LFBC) methods have been developed. These methods include approximate Bayesian computation (ABC) \cite{beaumont2002approximate, marin2012approximate}, synthetic likelihoods (SL) \cite{price2018bayesian,wood2010statistical}, Bayesian optimization \cite{gutmann2016bayesian}, likelihood-free inference by ratio estimation \cite{thomas2022likelihood}, and pseudo-marginal methods \cite{andrieu2010particle,andrieu2009pseudo}. A comprehensive summary and review of these methods can be found in \cite{cranmer2020frontier}, and they have all been benchmarked in \cite{hermans2021trust,lueckmann2021benchmarking}. 

Posterior density estimation approaches approximate the posterior of interest $p(\theta|x_o)$ with a family of density estimators $q_\phi(\theta)$, where $\phi$ is the parameter to be tuned. Optimization-based approaches are widely used in these methods, and the Kullback-Leibler (KL) divergence between $p(\theta|x)$ and $q_\phi(\theta)$, which measures the difference between two densities, is commonly chosen as the loss function. Variational Bayes (VB), as an effective optimization-based method for approximating the posterior distribution of a Bayesian problem, is widely used. In the likelihood-free context, \cite{tran2017variational} developed a new VB method with an intractable likelihood, while \cite{he2022unbiased} proposed an unbiased VB method based on nested MLMC. However, these methods tend to fail in cases where simulations are expensive, as the nested estimation in these methods requires additional simulation procedures.

For this reason, there has been a growing interest recently in employing neural networks to represent probability densities, particularly normalizing flows~\cite{kobyzev2020normalizing}. When the inference problem focuses solely on the observation $x_o$, the data efficiency can be improved by using the sequential training schemes in the sequential neural posterior estimation (SNPE) methods \cite{apt, lueckmann2017flexible, papamakarios2016fast}. In this approach, model parameters are drawn from a proposal distribution that is more informative about $x_o$ than the prior distribution. However, SNPE requires modifying the loss function relative to neural posterior estimation (NPE) to ensure that the neural network approximates the true posterior $p(\theta|x)$. Many approaches have been developed to address this issue.

Among SNPE methods, automatic posterior transformation (APT) requires computing the expectation of a logarithm. This logarithm is of an intractable normalizing factor, and crucially, the factor itself is also an expectation, resulting in a nested expectation overall. Greenberg et al.~\citep{apt} then used atomic proposals 
to render an analytical normalizing constant. However, it remains challenging to perform convergence analysis using the existing analysis techniques \cite{bottou2018optimization}. To address this limitation, we reformulate APT as a nested estimation problem. We then develop several multilevel Monte Carlo (MLMC)~\cite{giles2015multilevel} nested estimators for the loss function and its gradients, which enjoy lower computational complexity than the single-level nested estimator. In stochastic gradient training, the usual unbiased MLMC estimator~\cite{rhee2015unbiased} may suffer from high gradient variance and large GPU memory footprints. Motivated by~\cite{hu2021bias}, we study two MLMC variants: (i) an unbiased scheme that reduces variance at additional computational cost, and (ii) a truncated scheme that introduces a small, controlled bias to reduce variance while adhering to compute and memory budgets. We establish several theorems that investigate the order of bias, variance, and the average cost of the MLMC estimators. Finally, we establish convergence guarantees for stochastic gradient descent (SGD) and quantify the effects of the bias and variance of the gradient estimator.

The remainder of this paper is organized as follows. In \Cref{section:biasedAPT}, we propose the nested APT method and analyze its computational complexity. In \Cref{section:UBnestedAPT}, we introduce the basics of MLMC methods, including their formulation and theoretical analysis of the random sequence utilized in MLMC methods. Equipped with these tools, we develop both unbiased MLMC methods and truncated MLMC methods and analyze the order of variance for their losses and gradients, as well as the average cost. In \Cref{section:conver_sgd}, we provide convergence analysis for nested APT and truncated MLMC methods in the case of SGD. In~\Cref{sec:UBAPT_exp}, we conduct a series of numerical experiments on benchmark tasks. Finally, we conclude this paper with some remarks in \Cref{section:discuss}. Lengthy proofs and additional results are deferred to the appendices.

\section{Nested APT} \label{section:biasedAPT}
\subsection{Problem formulation}

Assume that the prior distribution has density $p(\theta)$ with respect to the Lebesgue measure, where $\theta \in \Theta$ denotes the model parameter of interest. We focus on the finite-dimensional parameter space $\Theta$. Given an observed sample $x_o$, our objective is to perform inference on the posterior $p(\theta|x_o)\propto p(\theta) p(x_o|\theta)$. However, in many cases, the likelihood $p(x|\theta)$ either lacks a closed form or is difficult to evaluate directly. Instead, it can be expressed via a simulator so that we can sample $x$ from $p(x|\theta)$ for a fixed model parameter $\theta$.

Since our objective is to approximate the posterior of interest $p(\theta|x_o)$ using tractable density estimators, the KL divergence serves as a primary measure of the discrepancy between two densities. The KL divergence is defined as
\begin{equation*}
\mathcal{D}_{\mathrm{KL}}\left(p(\theta)\|q(\theta)\right)  = \int p(\theta)\log \frac{ p(\theta)}{q(\theta)}\mathrm{d}\theta,
\end{equation*}
which is nonnegative by Jensen's inequality and attains its minimum when $q(\theta)$ agrees with $p(\theta)$, making it suitable as a loss function. When the likelihood $p(x|\theta)$ is tractable, one can directly approximate $p(\theta|x_o)$ by minimizing the KL divergence between the target distribution and the proposed estimator $q(\theta)$ within a certain family of distributions. In the likelihood-free context, this can be viewed as a problem of conditional density estimation. Within this framework, a conditional density estimator $q_{F(x, \phi)}(\theta)$ based on a neural network \cite{papamakarios2016fast, apt} is utilized to approximate $p(\theta|x)$ over the admissible set of tuning parameters $\phi \in \Phi$, where $\Phi\subset \mathbb{R}^d$ is the space of the neural network parameters. To this end, we focus on minimizing the following \textit{average KL divergence} under the marginal distribution with density $p(x)=\int p(\theta) p(x|\theta) \mathrm{d} \theta$
\begin{align*}
&\quad \ \mathbb{E}_{ p(x)}\left[\mathcal{D}_{\mathrm{KL}}\left(p(\theta|x)\| q_{F(x,\phi)}(\theta)\right)\right] \notag\\
&= \iint p(x)p(\theta|x) \left(\log  p(\theta|x)-\log q_{F(x,\phi)}(\theta)\right) \mathrm{d}x\mathrm{d}\theta \notag\\
&=-\mathbb{E}_{p(\theta, x)}\left[\log q_{F(x,\phi)}(\theta)\right] + \iint  p(\theta,x) \log p(\theta|x) \mathrm{d}x\mathrm{d}\theta \notag\\
& := L_1(\phi) + \iint p(\theta,x) \log p(\theta|x) \mathrm{d}x\mathrm{d}\theta,
\end{align*}
where the term 
\begin{equation}\label{eq:npe}
    L_1(\phi):=-\mathbb{E}_{p(\theta, x)}\left[\log q_{F(x,\phi)}(\theta)\right],
\end{equation}
is used as the loss function. {Since $L_1(\phi)$ is intractable, we approximate it by the empirical estimator}
\begin{equation}\label{eq:npe_emp}
\hat{L}_1(\phi) = -\frac{1}{N}\sum_{i=1}^N \log q_{F(x_i,\phi)} (\theta_i),
\end{equation}
where the training data $\{(\theta_i, x_i)\}_{i=1}^N$ is sampled from the joint probability density $p(\theta, x)=p(\theta)p(x|\theta)$. After training, given the observation $x_o$, the posterior $p(\theta|x_o)$ can be approximated by $q_{F(x_o, \phi)}(\theta)$.

Since we aim to conduct conditional density estimation at $x_o$, it is essential to utilize a proposal $\tilde{p}(\theta)$ that is more informative regarding $x_o$ than the prior $p(\theta)$. After initializing $\tilde{p}(\theta)$ as $p(\theta)$, we then want the approximation of $p(\theta|x_o)$ to serve as a good candidate for the proposal in subsequent simulations. This conditional density estimation with an adaptively chosen proposal is called \textit{sequential neural posterior estimation} (SNPE). However, after replacing $p(\theta,x)$ with $\tilde{p}(\theta,x) = \tilde{p}(\theta)p(x|\theta)$ in  \eqref{eq:npe}, it is observed that $q_{F(x,\phi)}(\theta)$ approximates the \textit{proposal posterior}
\begin{equation}\label{eq:proposal_post}
    \tilde{p}(\theta|x):=p(\theta|x)\frac{\tilde{p}(\theta)p(x)}{p(\theta)\tilde{p}(x)}, 
\end{equation}
where $\tilde p(x)=\int \tilde p(\theta) p(x|\theta)\mathrm{d}\theta$. Hence, we adjust the loss function so that $q_{F(x,\phi)}(\theta)$ approximates the true posterior $p(\theta|x)$.

In APT \cite{apt}, the proposal distribution $\tilde{p}(\theta)$ is initialized as the prior distribution $p(\theta)$. Consequently, \eqref{eq:npe} can be directly used for the loss function. In the subsequent rounds, \cite{apt} proposed to replace $q_{F(x,\phi)}(\theta),\ p(\theta,x)$ with $\tilde{q}_{F(x,\phi)}(\theta)$, $\tilde{p}(\theta, x)$ in  \eqref{eq:npe} respectively. Explicitly, the loss function proposed in APT is
\begin{equation}\label{eq:apt_loss}
L_2(\phi):=\mathbb{E}_{\tilde{p}(\theta,x)}\left[-\log \tilde q_{F(x,\phi)}(\theta)\right],   
\end{equation}
where
\begin{equation}
\tilde{q}_{F(x,\phi)}(\theta) = q_{F(x,\phi)}(\theta)\frac{\tilde{p}(\theta)}{p(\theta)}\frac{1}{Z(x,\phi)},\quad
Z(x,\phi) = \int \frac{q_{F(x,\phi)}(\theta')}{p(\theta')} \tilde{p}(\theta')\mathrm{d}\theta',\label{eq:modi_density_est}
\end{equation}
$Z(x,\phi)$ here denotes the normalizing constant. Proposition 1 in \cite{papamakarios2016fast} shows that if $q_{F(x,\phi)}(\theta)$ is expressive enough that $\tilde{q}_{F(x,\phi^*)}(\theta) = \tilde{q}(\theta)$ for some parameter $\phi^*$, then $q_{F(x,\phi^*)}(\theta) = q(\theta|x)$. SNPE methods like APT iteratively refine the parameter $\phi$ and proposal $\tilde{p}(\theta)$ over a series of iterations, commonly called the `rounds' of training. In the $r$-th round, a distinct proposal $\tilde{p}_r(\theta)$ is used, leading to a different loss function.

Since the integral $Z(x,\phi)$ is usually intractable in practice, APT proposes the use of `atomic' proposals. APT with such proposals is known as atomic APT. Specifically, they assume that $\tilde{p}(\theta)$ is a discrete uniform distribution $\mathcal{U}\{\theta_1,\dots,\theta_M\}$, where $\theta_i$ are sampled from a distribution. The uniform setting of the proposal distribution enables the analytical computation of $Z(x,\phi)$. With this method, \eqref{eq:proposal_post} and \eqref{eq:modi_density_est} can be reformulated as
\begin{align}
    \tilde{p}(\theta|x)=\frac{p(\theta|x)/p(\theta)}{\sum_{i=1}^M p(\theta_i|x)/p(\theta_i)},\quad \tilde{q}_{F(x,\phi)}(\theta) =\frac{q_{F(x,\phi)}(\theta)/p(\theta)}{\sum_{i=1}^Mq_{F(x,\phi)}(\theta_i)/p(\theta_i)}.
\end{align}
Proposition 1 in \cite{apt} provides the consistency guarantees of atomic APT: given that $\theta_1,\dots,\theta_M$ are generated from a distribution that covers the target $p(\theta|x_o)$ support, atomic APT can recover the full posterior.

However, to the best of our knowledge, the use of atomic proposals makes it challenging to analyze their convergence behavior. Therefore, we are unable to explain the unexpected low performance in some tasks \cite{deistler2022truncated} with the existing convergence results \cite{bottou2018optimization}, both empirically and theoretically. As an alternative approach to estimating $Z(x,\phi)$, which enjoys a comprehensive theoretical framework and comparable performance, is studied in the next section.

\subsection{Nested estimation}\label{section:biasAPT}
For ease of presentation, we denote $$g_\phi(x,\theta):= \frac{q_{F(x,\phi)}(\theta)}{p(\theta)}.$$ We reformulate \eqref{eq:apt_loss} as 
\begin{align}
&\quad \ \mathbb{E}_{\tilde{p}(\theta,x)}\left[-\log \tilde q_{F(x,\phi)}(\theta)\right]\notag\\
&= -\mathbb{E}_{\tilde{p}(\theta,x)}\left[\log \frac{q_{F(x,\phi)}(\theta)}{p(\theta)}\right]+\mathbb{E}_{\tilde{p}(x)}[\log Z(x,\phi)]-\mathbb{E}_{\tilde{p}(\theta)}\left[\log \tilde{p}(\theta)\right]\notag\\
&= -\mathbb{E}_{\tilde{p}(\theta,x)}\left[\log g_\phi(x,\theta)\right] + \mathbb{E}_{\tilde{p}(x)}\left[\log \mathbb{E}_{\tilde{p}(\theta')}\left[g_\phi(x,\theta')\right]\right]-\mathbb{E}_{\tilde{p}(\theta)}\left[\log \tilde{p}(\theta)\right]\notag\\
&:=\mathcal{L}(\phi)-\mathbb{E}_{\tilde{p}(\theta)}\left[\log \tilde{p}(\theta)\right]\label{eq:nested_last},
\end{align}
where 
\begin{equation}\label{eq:refor_apt_loss}
    \mathcal{L}(\phi) := -\mathbb{E}_{\tilde{p}(\theta,x)}\left[\log g_\phi(x,\theta)\right] + \mathbb{E}_{\tilde{p}(x)}\left[\log \mathbb{E}_{\tilde{p}(\theta')}\left[g_\phi(x,\theta')\right]\right]
\end{equation} is selected as the loss function, and the last term $-\mathbb{E}_{\tilde{p}(\theta)}\left[\log \tilde{p}(\theta)\right]$ in \eqref{eq:nested_last} is dropped for it is independent of $\phi$. We employ stochastic gradient methods to optimize the loss function in this paper, and therefore $\nabla_\phi\mathcal{L}(\phi)$ is of interest. By interchanging expectation $\mathbb{E}[\cdot]$ and gradient operator $\nabla_{\phi}$, the gradient of \eqref{eq:refor_apt_loss} is given by
\begin{align}\label{eq:grad_refor_apt_loss}
    \nabla_\phi\mathcal{L}(\phi) := -\mathbb{E}_{\tilde{p}(\theta,x)}\left[\nabla_\phi\log g_\phi(x,\theta)\right] + \mathbb{E}_{\tilde{p}(x)}\left[\nabla_\phi\log \mathbb{E}_{\tilde{p}(\theta')}\left[g_\phi(x,\theta')\right]\right].
\end{align}
Notice that \eqref{eq:grad_refor_apt_loss} is similar to the gradient examined in the optimization procedure of Bayesian experimental design \cite{carlon2020nesterov,foster2019variational,huan2014gradient,kleinegesse2020bayesian}. 
Given any query point $(\theta,x)\sim \tilde{p}(\theta,x)$, we denote the corresponding queries of the loss and gradient as
\begin{align}
    \psi_\phi&:= -\log g_\phi(x,\theta)+\log Z(x,\phi)\label{eq:loss_query},\\
    \rho_\phi&:=-\nabla_\phi\log g_\phi(x,\theta)+\nabla_\phi \log Z(x,\phi)\label{eq:grad_query},
\end{align}
so that $\mathbb{E}_{\tilde{p}(\theta,x)}[\psi_\phi] = \mathcal{L}(\phi)$ and $\mathbb{E}_{\tilde{p}(\theta,x)}[\rho_\phi] = \nabla_\phi \mathcal{L}(\phi)$. 

Given the intractability of the normalizing constant $Z(x,\phi)=\mathbb{E}_{\tilde{p}(\theta')}\left[g_\phi(x,\theta')\right]$, it remains an obstacle to deriving an estimator for the loss function. A simple choice is leveraging its empirical estimator based on $M$ samples, which is given by
\begin{equation}\label{eq:emp_grad_loss}
 \hat{Z}_M(x,\phi) = \frac{1}{M}\sum_{j=1}^{M} g_\phi(x,\theta'_j),
\end{equation}
where $\theta'_1, \cdots, \theta'_M\sim\tilde{p}(\theta')$ independently, $M$ in the subscript denotes the number of the samples used for the estimation of $Z(x,\phi)$. We thus arrive at queries for nested estimators of \eqref{eq:loss_query} and \eqref{eq:grad_query} respectively
\begin{align}
    \psi_{\phi,M} &= -\log g_\phi(x,\theta)+\log \hat{Z}_M(x,\phi)=\log \frac{1}{M}\sum_{j=1}^M \frac{g_\phi(x,\theta'_j)}{g_\phi(x,\theta)},\label{eq:emp_loss_query}\\
    \rho_{\phi,M} &= -\nabla_\phi\log g_\phi(x,\theta)+\nabla_\phi \log \hat{Z}_M(x,\phi).\label{eq:emp_grad_query}
\end{align}

Nested estimators for the loss function \eqref{eq:refor_apt_loss} and its gradient \eqref{eq:grad_refor_apt_loss} are then given by the mean of $N$ independent identically distributed (iid) copies of their queries
\begin{align}
\hat{{\mathcal{L}}}^{\mathrm{Ne}}(\phi) &=  \frac{1}{N} \sum_{i=1}^N \psi_{\phi,M}^{(i)} = \frac{1}{N} \sum_{i=1}^N \log \frac{1}{M}\sum_{j=1}^M \frac{g_\phi(x_i,\theta'_{ij})}{g_\phi(x_i,\theta_i)}\label{eq:emp_bias_apt_loss},\\
\nabla_\phi\hat{\mathcal{L}}^{\text{Ne}}(\phi) &= \frac{1}{N}\sum_{i=1}^N \rho^{(i)}_{\phi,M}\label{eq:emp_bias_apt_grad},
\end{align}
where $\theta_{ij}'\simiid \tilde{p}(\theta')$ constitute the inner sample of size $M$ and $(\theta_i,x_i)\simiid \tilde{p}(\theta,x)$ constitute the outer sample of size $N$, and $\psi^{(i)}_{\phi,M},\ \rho^{(i)}_{\phi,M}$ are iid copies of $\psi_{\phi,M}$, $\rho_{\phi,M}$ respectively. Due to the nonlinearity of the logarithm, \eqref{eq:emp_bias_apt_loss} is a biased estimator. Using similar arguments to \cite{ryan2003estimating}, it is straightforward to show that the nested estimator $\hat{{\mathcal{L}}}^{\mathrm{Ne}}(\phi)$ has a variance of $\mco(1/N)$  and a bias of $\mco(1/M)$. 
For a given mean squared error (MSE) $\epsilon$, choosing $N=\mco(\epsilon^{-2})$ and $M=\mco(\epsilon^{-1})$ yields a computational complexity of $\mco(NM)=\mco(\epsilon^{-3})$. 

Plain nested estimators \eqref{eq:emp_loss_query}--\eqref{eq:emp_bias_apt_grad} are also known as single-level nested estimators compared to MLMC. Empirical results in \Cref{appendix:exp_nested_atomic_APT} show that  nested APT with the single-level nested estimator performs comparably to atomic APT. Casting the task as a nested estimation problem facilitates numerical analysis and further enables us to leverage MLMC techniques to enhance the single-level nested estimator.

\section{Multilevel nested APT} \label{section:UBnestedAPT}

\subsection{Basic idea of MLMC}\label{subsec:basic_mlmc}

We first introduce the idea of MLMC in a generic setting following \cite{giles2015multilevel}. Consider the problem of estimating $\mathbb{E}[P]$, where the sampling of $P$ is costly or even infeasible. Suppose that  $P$ can be approximated by tractable $P_\ell$ with increasing accuracy, but also increasing cost, as $\ell\to \infty$. As a result, the expectation $\mathbb{E}[P]$ can be estimated by the single-level Monte Carlo estimator, 
\begin{equation}\label{eq:singlelevel}
\hat{\mu}_{L,N}:=\frac 1N \sum_{i=1}^N P_L^{(i)},
\end{equation}
where $P_L^{(i)}$ are iid copies of $P_L$. Assume that $P_\ell$ has a bias $|\mathbb{E}[P_\ell - P]|=\mco(2^{-\alpha \ell})$ and an expected cost $\mco(2^{\gamma \ell})$, where $\alpha,\gamma>0$.  The estimator $\hat{\mu}_{L,N}$ then has an MSE of $\mco(2^{-2\alpha L})+\mco(1/N)$. To ensure an MSE of $\mco(\epsilon^2)$, one can take $L = \mco(\log(\epsilon^{-1/\alpha}))$ and $N = \mco(\epsilon^{-2})$, yielding an average computational cost of $\mco(N2^{\gamma L})=\mco(\epsilon^{-2-\gamma/\alpha})$. MLMC can greatly reduce the computational cost by performing most simulations with low accuracy (low level) at a low cost, with relatively few simulations being performed at high accuracy (high level) and a high cost. Specifically, using the simple identity
\begin{equation*}
\mbe[P_L] = \mbe[P_0] +\sum_{\ell=1}^L\mbe[P_\ell-P_{\ell-1}]=\sum_{\ell=0}^L\mbe[\Delta P_{\ell}],
\end{equation*}
where $\mbe[\Delta P_{\ell}]=\mbe[P_\ell-P_{\ell-1}]$ for $\ell\ge 1$ and $\mbe[\Delta P_{0}]=\mbe[P_0]$,
we can use the following unbiased estimator for $\mbe[P_L]$,
\begin{equation}\label{eq:mlmc}
\hat{\mu}_{\mlmc}:=\sum_{\ell=0}^L\left(\frac 1{N_{\ell}} \sum_{i=1}^{N_\ell}\Delta P_{\ell}^{(i)}\right),
\end{equation}
where $\Delta P_{\ell}^{(i)}$ are iid copies of $\Delta P_{\ell}$, the sample sizes $N_{\ell}$ decrease with level $\ell$.

Now suppose that $\Delta P_{\ell}$ has a variance of $\mco(2^{-r\ell})$ with an average cost of $\mco(2^{\gamma \ell})$. For a given MSE of $\mco(\epsilon^2)$, \cite{giles2015multilevel} showed that choosing proper finest level $L$ and  sample sizes $N_\ell$ render a computational complexity $C$ with bound
\begin{equation}\label{eq:mlmc_complexity}
\mathbb{E}[C] =
\begin{cases}
\mco(\epsilon^{-2}) & r > \gamma, \\
\mco(\epsilon^{-2} (\log \epsilon)^2) & r = \gamma, \\
\mco(\epsilon^{-2 - (\gamma - \beta)/\alpha}) & r < \gamma.
\end{cases}
\end{equation}
As stated before, it requires a computational complexity of $ \mco(\epsilon^{-2 - \gamma/\alpha})$ for the single-level Monte Carlo to achieve the same level of MSE. From the viewpoint of complexity, MLMC improves upon the single-level Monte Carlo for all cases of $r$ and achieves the optimal complexity of $\mco(\epsilon^{-2})$ when $r>\gamma$. 

Nested simulation combined with the MLMC method has been studied for other applications \cite{ giles:2018b, Giles2017DecisionmakingUU,goda2019multilevelMCest,he2022unbiased}. We next show how to use MLMC to estimate the gradient $\nabla_\phi \mathcal{L}(\phi)=\mathbb{E}\left[\rho_{\phi}\right]$, which is a nested expectation. To fit the MLMC setting, we can take $P = \rho_{\phi}$ and $P_\ell = \rho_{\phi,M_\ell}$, where $M_\ell = 2^{\ell}M_0 = 2M_{\ell-1}$. Note that the the base 2 is specifically chosen here solely for the convenience of subsequent analysis~\cite{goda2022unbiased}. Since $P_\ell$ converges to $P$ as $\ell\to\infty$, the core of MLMC is to construct coupling estimators $\Delta P_\ell$ with a variance decaying as $\mco(2^{-r\ell})$. To gain a large $r$, we use the  antithetic coupling construction as in \cite{bujok2015multilevel,Giles2017DecisionmakingUU,Michael2014antithetic,goda2019multilevelMCest}, which is better than the naive coupling $\Delta P_\ell=\rho_{\phi,M_\ell}-\rho_{\phi,M_{\ell-1}}$.
The fundamental concept of antithetic coupling involves the selection of two non-overlapping subsets, each of size $M_{\ell-1}$ from the $M_\ell$ inner samples $\{\theta_j'\}_{j=1}^{M_\ell}$ utilized for computing $\rho_{\phi, M_\ell}$ given in \eqref{eq:emp_grad_query}. This results in two independent realizations of $\rho_{\phi,M_{\ell-1}}$, identified as $\rho^{(a)}_{\phi,M_{\ell-1}}$ and $\rho^{(b)}_{\phi,M_{\ell-1}}$. Specifically,
\begin{equation*}
\rho^{(a)}_{\phi,M_{\ell-1}} =-\frac{\nabla_\phi g_\phi(x,\theta)}{g_\phi(x,\theta)}+\frac{\nabla g^{(a)}_{\phi,M_{\ell-1}}(x)}{g^{(a)}_{\phi,M_{\ell-1}}(x)},
\end{equation*}
where
\begin{equation*}
g^{(a)}_{\phi,M_{\ell-1}}(x) = \frac{1}{M_{\ell-1}}\sum_{j=1}^{M_{\ell-1}} g_\phi(x,\theta_j'),\ \nabla g^{(a)}_{\phi,M_{\ell-1}}(x) =  \frac{1}{M_{\ell-1}}\sum_{j=1}^{M_{\ell-1}}\nabla_\phi g_\phi(x,\theta_j').
\end{equation*}
The notations $\rho^{(b)}_{\phi,M_{\ell-1}}$, $\nabla g^{(b)}_{\phi,M_{\ell-1}}(x)$, and $g^{(b)}_{\phi,M_{\ell-1}}(x)$ are defined in a similar way by using  $\theta'_j$,  $j=M_{\ell-1}+1,\dots,M_{\ell}$ instead. 
Define 
\begin{equation}\label{eq:antit_delta_grad_query}
    \Delta\rho_{\phi,0}=\rho_{\phi, M_0},\quad \Delta\rho_{\phi,\ell}=\rho_{\phi,M_\ell}-\frac{1}{2}\left(\rho^{(a)}_{\phi,M_{\ell-1}} + \rho^{(b)}_{\phi,M_{\ell-1}}\right) \quad (\ell \geq 1). 
\end{equation}
It suffices to take $\Delta P_\ell= \Delta\rho_{\phi,\ell}$ since
\begin{equation*} %
\mathbb{E}\left[\Delta \rho_{\phi,0}\right] =  \mathbb{E}\left[\rho_{\phi,M_0}\right],\quad \mathbb{E}\left[\Delta \rho_{\phi,\ell}\right] = \mathbb{E}\left[\rho_{\phi,M_\ell}-\rho_{\phi,M_{\ell-1}}\right] \quad (\ell \geq 1).
\end{equation*}

In the case of estimating the loss function $\mathcal{L}(\phi)=\mathbb{E}[\psi_\phi]$, the antithetic construction of the coupling estimators $\{\Delta \psi_{\phi,\ell}\}_{\ell = 0}^{\infty}$ can be similarly given by
\begin{equation}\label{eq:antit_delta_loss_query}
    \Delta\psi_{\phi,0}=\psi_{\phi, M_0},\quad \Delta\psi_{\phi,\ell}=\psi_{\phi,M_\ell}-\frac{1}{2}\left(\psi^{(a)}_{\phi,M_{\ell-1}} + \psi^{(b)}_{\phi,M_{\ell-1}}\right) \quad (\ell \geq 1),
\end{equation}
where $\psi_{\phi,M_\ell}$ is defined in \eqref{eq:emp_loss_query}, and
$$ \psi_{\phi,M_{\ell-1}}^{(a)} =\log \frac{1}{M_{\ell-1}}\sum_{j=1}^{M_{\ell-1}} \frac{g_\phi(x,\theta'_j)}{g_\phi(x,\theta)},\quad \psi_{\phi,M_{\ell-1}}^{(b)} =\log \frac{1}{M_{\ell-1}}\sum_{j=M_{\ell-1}+1}^{M_{\ell}} \frac{g_\phi(x,\theta'_j)}{g_\phi(x,\theta)}.
$$

We have $\gamma=1$ for $\Delta\rho_{\phi,\ell}$ and $\Delta\psi_{\phi,\ell}$, it remains to examine the values of $\alpha$ and $r$. To this end, the next two theorems provide upper bounds for the first and second moments of $|\Delta \psi_{\phi,\ell}|$ and $\left\|\Delta \rho_{\phi,\ell}\right\|_2$ as $\ell\to\infty$. 
\begin{theorem}\label{theorem:var_delta_loss_query}
If there exist $p,q >2$ with $(p-2)(q-2)\geq 4$ such that for any $\phi\in\Phi$,
\begin{align*}
    \quad \mathbb{E}\left[\left(\frac{g_\phi(x,\theta)}{Z(x,\phi)}\right)^p\right]<\infty \quad and \quad \mathbb{E}\left[\left|\log\frac{g_\phi(x,\theta)}{Z(x,\phi)}\right|^q\right]<\infty,
\end{align*}
where $(\theta,x)\sim \tilde{p}(\theta)p(x|\theta)$, we have
\begin{align*}
    \mathbb{E}\left[|\Delta \psi_{\phi,\ell}|\right]= \mco(2^{-\ell}),\quad 
    \mathbb{E}\left[\Delta \psi_{\phi,\ell}^2\right]=\mco(2^{-r_1\ell}),
\end{align*}
where $r_1 = \min(p(q-2)/(2q),2)\in(1,2]$.
\end{theorem}

\begin{proof}
This proof follows an argument similar to Theorem 2 in \cite{goda2019multilevelMCest}, which particularly showed that $\mathbb{E}\left[|\Delta \psi_{\phi,\ell}|\right]= \mco(2^{-\alpha_1\ell})$ with $\alpha_1=\min(p(q-1)/(2q),1)$,
    Since $(p-2)(q-2)\ge 4$, we actually have $\alpha_1=1$.
\end{proof}

\begin{theorem}\label{theorem:var_delta_grad_query}
If there exists $s>2$ such that for any $\phi\in\Phi$,
    \begin{align*}
        \underset{x,\phi,\theta}{\sup} \|\nabla_\phi \log g_\phi(x,\theta)\|_{\infty} <\infty,\quad  \mathbb{E}\left[\left(\frac{g_\phi(x,\theta)}{Z(x,\phi)}\right)^s\right]<\infty,
    \end{align*}
    where the expectation is taken under the law $(\theta,x)\sim \tilde{p}(\theta)p(x|\theta)$, we then have  
    \begin{align*}
    \mathbb{E}\left[\left\|\Delta \rho_{\phi,\ell}\right\|_2\right] =\mco(2^{-\ell}),\quad \mathbb{E}\left[\|\Delta \rho_{\phi,\ell}\|_2^2\right] =\mco(2^{-r_2\ell}),
    \end{align*}
    where  $r_2 = \min (s,4)/2\in (1,2]$.
\end{theorem}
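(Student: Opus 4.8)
The plan is to exploit the antithetic structure in \eqref{eq:antit_delta_grad_query} to cancel the leading-order fluctuations and reduce the estimate to a second-order remainder of the ratio map $f(z,v)=v/z$. First I would observe that the term $-\nabla_\phi g_\phi(x,\theta)/g_\phi(x,\theta)$ common to $\rho_{\phi,M_\ell}$, $\rho^{(a)}_{\phi,M_{\ell-1}}$ and $\rho^{(b)}_{\phi,M_{\ell-1}}$ cancels identically in $\Delta\rho_{\phi,\ell}$, and that the $M_\ell$-sample averages are exactly the arithmetic means of the two half-sample averages, so that, abbreviating $g^{(a)}=g^{(a)}_{\phi,M_{\ell-1}}(x)$, $\nabla g^{(a)}=\nabla g^{(a)}_{\phi,M_{\ell-1}}(x)$ and likewise for $(b)$,
\begin{equation*}
\Delta\rho_{\phi,\ell} = f\!\left(\tfrac{g^{(a)}+g^{(b)}}{2},\tfrac{\nabla g^{(a)}+\nabla g^{(b)}}{2}\right) - \tfrac12\Big(f(g^{(a)},\nabla g^{(a)}) + f(g^{(b)},\nabla g^{(b)})\Big).
\end{equation*}
A second preliminary step uses the hypothesis $\sup_{x,\phi,\theta}\|\nabla_\phi\log g_\phi\|_\infty=:B<\infty$, which gives the pointwise bound $|\nabla_\phi g_\phi|\le B\,g_\phi$ and hence $\|\nabla g^{(a)}/g^{(a)}\|_\infty\le B$ uniformly; consequently each of the three ratio terms, and therefore $\Delta\rho_{\phi,\ell}$ itself, is bounded in $\ell_2$-norm by a deterministic constant.

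Next I would work conditionally on $x$ and Taylor-expand $f$ around the deterministic centre $(Z(x,\phi),\nabla_\phi Z(x,\phi))$, the common conditional mean of $(g^{(a)},\nabla g^{(a)})$ and $(g^{(b)},\nabla g^{(b)})$. Since the antithetic combination $T[\psi]=\psi(\tfrac{a+b}{2})-\tfrac12(\psi(a)+\psi(b))$ annihilates every affine function, the constant and first-order parts of the expansion cancel exactly; this cancellation is precisely the mechanism that upgrades the rate. To handle the singularity of $f$ at $z=0$ I would split on the event $\mathcal G$ that all three denominators exceed $Z/2$: on $\mathcal G$ the second derivatives of $f$ are bounded, so Taylor's theorem with remainder yields $\|\Delta\rho_{\phi,\ell}\|_2 \lesssim |g^{(a)}-g^{(b)}|\,\|\nabla g^{(a)}-\nabla g^{(b)}\| + (g^{(a)}-g^{(b)})^2$, while on $\mathcal G^c$ I would fall back on the uniform bound of the previous paragraph.

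For the moment estimates I would apply Lemma~\ref{lemma:key_lemma1} conditionally on $x$, taking $X=g_\phi(x,\theta')/Z(x,\phi)-1$ and its gradient analogue (which inherits finite $s$-th moments from $|\nabla_\phi g_\phi|\le Bg_\phi$). This gives $\mathbb E[|g^{(a)}-g^{(b)}|^u\mid x]\lesssim M_{\ell-1}^{-u/2}$ and $\mathbb E[\|\nabla g^{(a)}-\nabla g^{(b)}\|^u\mid x]\lesssim M_{\ell-1}^{-u/2}$ for $u\le s$, together with $\mathbb P(\mathcal G^c\mid x)\lesssim M_{\ell-1}^{-s/2}$ from the tail bound. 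For the first moment, Cauchy--Schwarz on $\mathcal G$ with $u=2$ gives $\mathbb E[\|\Delta\rho_{\phi,\ell}\|_2\mathbf 1_{\mathcal G}]\lesssim M_{\ell-1}^{-1}$, while the bounded contribution on $\mathcal G^c$ is $\lesssim M_{\ell-1}^{-s/2}=o(M_{\ell-1}^{-1})$; since $M_{\ell-1}=M_\ell/2$ this yields the claimed $M_\ell^{-1}$. For the second moment I would split by $s$: when $s\ge4$, Cauchy--Schwarz with $u=4$ bounds the $\mathcal G$-contribution by $M_{\ell-1}^{-2}$ and $\mathcal G^c$ by $M_{\ell-1}^{-s/2}\le M_{\ell-1}^{-2}$, giving $r_2=2$; when $2<s<4$ I would interpolate, using $\|\Delta\rho_{\phi,\ell}\|_2^2\le C^{2-s/2}\|\Delta\rho_{\phi,\ell}\|_2^{s/2}$ together with the $\mathcal G$-Taylor bound and H\"older with the available $L^s$ increment bounds to obtain $\lesssim M_{\ell-1}^{-s/2}$, matched by the same order on $\mathcal G^c$, giving $r_2=s/2$.

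Finally I would integrate over $x$, noting that the $x$-dependent constants produced by Lemma~\ref{lemma:key_lemma1} aggregate into the finite quantities $\mathbb E_{x,\theta'}[|g_\phi/Z|^s]$ and $B$ supplied by the hypotheses. I expect the main obstacle to be the regime $2<s<4$: there the denominator can be moderately small and the moments are too few to control the quadratic remainder directly in $L^2$, so one must trade the uniform $\ell_\infty$ bound on $\Delta\rho_{\phi,\ell}$ against the $L^s$ smallness of the increments at exactly the exponent $s/2$ in order to recover the stated rate $r_2=\min(s,4)/2$.
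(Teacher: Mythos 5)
Your proposal is correct and takes essentially the same route as the paper's proof: the same split into a rare ``bad'' event (handled by the deterministic bound coming from the sup-norm hypothesis together with the tail estimate of Lemma~\ref{lemma:key_lemma1}) and a ``good'' event on which the antithetic difference is second-order small, followed by the same conditional moment bounds from Lemma~\ref{lemma:key_lemma1} and the same interpolation between the uniform bound and the available $L^s$ increment bounds to cover $2<s<4$. The only difference is presentational: the paper exhibits the second-order structure through the exact algebraic identity \eqref{eq:dec_delta_grad_query}, written in terms of deviations from $(Z(x,\phi),\nabla_\phi Z(x,\phi))$, whereas you obtain it from a Taylor expansion of $(z,v)\mapsto v/z$ with Hessian bounded on the good event (after normalizing by $Z(x,\phi)$, which your sketch leaves implicit); both yield identical rates.
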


\Cref{theorem:var_delta_grad_query}  is based on the work of \cite{goda2022unbiased} and its proof is detailed in \Cref{appendix:pf_var_delta_grad_query}.
By \Cref{theorem:var_delta_loss_query} and \Cref{theorem:var_delta_grad_query} ,   $\Delta \psi_{\phi,\ell}$ and $\Delta\rho_{\phi,\ell}$ have a variance of $\mco(2^{-r \ell})$ with $r>\gamma=1$.  Using the fact that
$$
|\mbe[P_\ell-P]| = \left |\sum_{\ell'=\ell+1}^\infty \mbe[\Delta P_\ell]\right |\le \sum_{\ell'=\ell+1}^\infty \mbe[\left |\Delta P_\ell\right |],
$$
we have the bias rate $\alpha = 1$ for both cases.  According to \eqref{eq:mlmc_complexity}, the resulting MLMC estimator \eqref{eq:mlmc}  yields the optimal complexity $\mco(\epsilon^{-2})$. Note that the nested APT estimator has a complexity of $\mco(\epsilon^{-2-\gamma/\alpha})=\mco(\epsilon^{-3})$. Although the MLMC estimator \eqref{eq:mlmc} achieves lower complexity than the nested APT estimator, it is biased and depends on the prespecified error threshold $\epsilon$. Moreover, the sample sizes $N_\ell$ for each level and the finest level $L$ need to be tuned by the threshold $\epsilon$. For practical training problems, it may be difficult to choose the threshold $\epsilon$.
Since MLMC achieves the optimal regime $r>\gamma$, one can design unbiased MLMC estimators for eliminating the bias of the MLMC estimator \eqref{eq:mlmc}. The unbiased MLMC also admits the optimal complexity $\mco(\epsilon^{-2})$ without choosing the sample sizes $N_\ell$.

\subsection{Unbiased MLMC for APT}

In this subsection, we develop a series of unbiased estimators for the loss function $\mathcal{L}(\phi)$ and its gradient $\nabla_\phi \mathcal{L}(\phi)$ based on the MLMC approach.
\subsubsection{RU-MLMC}
Recall that both the loss function and its gradient can be represented as a sum of expectations over a sequence of random variables; this topic has been well studied. We introduce a non-negative integer-valued random variable $L$ that is independent of the two random variable sequences, with its probability mass function $\mathbb{P}(L = \ell) = w_\ell$. By the law of total expectation, we  have 
\begin{equation*}
\mathbb{E}\left[{\omega_L}^{-1}\Delta \rho_{\phi,L}\right]=\sum_{\ell=0}^\infty\mathbb{E}\left[{\omega_\ell}^{-1}\Delta \rho_{\phi,\ell}\right]P(L=\ell)=\nabla_\phi \mathcal{L}(\phi).
\end{equation*}
The equivalence between $\nabla_\phi \mathcal{L}(\phi)$ and the expectation of the query
\begin{align}
    V_{\mathrm{RU}} ={\omega_L}^{-1}{\Delta \rho_{\phi, L}}\label{eq:query_ru_grad},
\end{align}
leads to an unbiased MLMC estimator of the gradient $\nabla_\phi \mathcal{L}(\phi)$.

We similarly define $$U_{\mathrm{RU}} = {\omega_L}^{-1}{\Delta \psi_{\phi, L}},$$ for the estimation of the loss function $\mathcal{L}(\phi)$. This method is known as the randomized unbiased multilevel Monte Carlo (RU-MLMC) method \cite{rhee2015unbiased}. In this study, 
we take $L$ as a geometric distribution $\mathrm{Ge}(p)$ with $\omega_\ell = (1-p)^\ell p$ and $p = 1-2^{-\alpha}$~\cite{goda2022unbiased}. To ensure a finite variance and finite expected computational cost for $V_{\mathrm{RU}}, U_{\mathrm{RU}}$, it is required that $\alpha \in (1,\min \left(r_1,r_2\right))$, where $r_1$ and $r_2$ are from \Cref{theorem:var_delta_loss_query} and \Cref{theorem:var_delta_grad_query}, respectively. This is further examined with the following theorem. Since the gradient $V_{\mathrm{RU}}$ is a vector, the notation $\mathrm{Var}[V_{\mathrm{RU}}]$ used in the following applies to the variance of each component of $V_{\mathrm{RU}}$ for simplicity. Denote the cost of RU-MLMC by $\mathrm{Cost}_{\mathrm{RU}}$.

\begin{theorem}\label{theorem:ru_prop_collection}
Consider the settings of \Cref{theorem:var_delta_loss_query} and \Cref{theorem:var_delta_grad_query}. If $L\sim \mathrm{Ge}(p)$ with $p = 1-2^{-\alpha}$ and $1<\alpha < \min \left(r_1,r_2\right)$, then we have
    \begin{align*}
\mathrm{Var}\left[U_{\mathrm{RU}}\right]  &\leq \frac{A}{(1-2^{\alpha-r_1})(1-2^{-\alpha})},\\
\mathrm{Var}\left[V_{\mathrm{RU}} \right]&\leq \frac{B}{(1-2^{\alpha-r_2})(1-2^{-\alpha})},\\
        \mathrm{Cost}_{\mathrm{RU}} &\propto M_0\frac{2^\alpha-1}{2^\alpha-2},
    \end{align*}
    where the constants $A, B$ are independent of $\alpha$,  $r_1$ and $r_2$.
\end{theorem}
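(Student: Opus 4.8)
The plan is to control each variance by the corresponding raw second moment, split the expectation over the levels by conditioning on $L$, and identify the resulting series as geometric; the cost is handled by the same level-wise averaging. I would start with $U_{\mathrm{RU}}$. Because $\mathrm{Var}[\,\cdot\,]$ is dominated by the second moment and $L$ is independent of the sequence $\{\Delta\psi_{\phi,\ell}\}_{\ell\geq0}$, conditioning on $L$ gives
\begin{equation*}
\mathrm{Var}[U_{\mathrm{RU}}] \leq \mathbb{E}\!\left[\|U_{\mathrm{RU}}\|_2^2\right] = \sum_{\ell=0}^{\infty} \mathbb{P}(L=\ell)\,\omega_\ell^{-2}\,\mathbb{E}\!\left[\|\Delta\psi_{\phi,\ell}\|_2^2\right] = \sum_{\ell=0}^{\infty} \omega_\ell^{-1}\,\mathbb{E}\!\left[\|\Delta\psi_{\phi,\ell}\|_2^2\right],
\end{equation*}
where the factor $\omega_L^{-2}$ inside $\|U_{\mathrm{RU}}\|_2^2=\omega_L^{-2}\|\Delta\psi_{\phi,L}\|_2^2$ combines with the mass $\mathbb{P}(L=\ell)=\omega_\ell$ to leave $\omega_\ell^{-1}$. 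Substituting the bound $\mathbb{E}[\|\Delta\psi_{\phi,\ell}\|_2^2]\lesssim M_\ell^{-r_1}=M_0^{-r_1}2^{-r_1\ell}$ from Theorem~\ref{theorem:var_delta_loss_query} together with $\omega_\ell^{-1}=2^{\alpha\ell}/(1-2^{-\alpha})$ turns the sum into a constant multiple of $\frac{1}{1-2^{-\alpha}}\sum_{\ell\geq0}2^{(\alpha-r_1)\ell}$.

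The condition $\alpha<r_1$ makes the ratio $2^{\alpha-r_1}<1$, so this geometric series converges to $(1-2^{\alpha-r_1})^{-1}$, yielding exactly the stated bound with $A$ absorbing $M_0^{-r_1}$ and the implied constant of Theorem~\ref{theorem:var_delta_loss_query}. The argument for $V_{\mathrm{RU}}$ is verbatim with $\Delta\psi$ replaced by $\Delta\rho$, Theorem~\ref{theorem:var_delta_grad_query} in place of Theorem~\ref{theorem:var_delta_loss_query}, and $r_1$ replaced by $r_2$; convergence now needs $\alpha<r_2$. Taken together with the cost constraint below, these force the admissible window $1<\alpha<\min(r_1,r_2)$.

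For the expected cost I would use that generating a single sample of $\Delta\rho_{\phi,\ell}$ (or $\Delta\psi_{\phi,\ell}$) costs on the order of $M_\ell=M_02^\ell$ inner evaluations, since the antithetic pair at level $\ell$ reuses the same $M_\ell$ inner samples. Averaging over $L$,
\begin{equation*}
\mathrm{Cost}_{\mathrm{RU}} \propto \sum_{\ell=0}^{\infty} M_0 2^{\ell}\,\omega_\ell = M_0(1-2^{-\alpha})\sum_{\ell=0}^{\infty}2^{(1-\alpha)\ell} = M_0\,\frac{1-2^{-\alpha}}{1-2^{1-\alpha}} = M_0\,\frac{2^{\alpha}-1}{2^{\alpha}-2},
\end{equation*}
where the series converges precisely because $\alpha>1$, and the last equality follows by multiplying numerator and denominator by $2^{\alpha}$.

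The genuinely load-bearing point, and the one I would check most carefully, is the interplay of the two summability constraints: $\alpha<\min(r_1,r_2)$ is exactly what renders both variance series geometrically summable, while $\alpha>1$ is what renders the cost series summable, so the interval $1<\alpha<\min(r_1,r_2)$ is not an arbitrary hypothesis but the precise feasibility region of the estimator (nonempty since $r_1,r_2>1$). The only other care needed is interpreting $\mathrm{Var}$ of the vector-valued gradient query as the trace of its covariance, so that the step $\mathrm{Var}\leq\mathbb{E}[\|\cdot\|_2^2]$ is legitimate, and confirming the independence of $L$ from the level sequences that licenses the conditioning identity above.
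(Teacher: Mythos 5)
Your proposal is correct and takes essentially the same route as the paper: the paper proves this theorem by the argument of Theorem~\ref{theorem:GRR_prop_collection}, namely conditioning on the independent level $L$, inserting the level-wise second-moment bounds $\mathbb{E}\left[\|\Delta\psi_{\phi,\ell}\|_2^2\right]\lesssim M_\ell^{-r_1}$ and $\mathbb{E}\left[\|\Delta\rho_{\phi,\ell}\|_2^2\right]\lesssim M_\ell^{-r_2}$ from Theorems~\ref{theorem:var_delta_loss_query} and~\ref{theorem:var_delta_grad_query}, and summing the resulting geometric series for both the variance and the expected cost. Your simplification that for RU-MLMC the inverse-probability weighting reduces the whole computation to $\sum_\ell \omega_\ell^{-1}\mathbb{E}\left[\|\Delta\cdot_{\phi,\ell}\|_2^2\right]$, together with the cost identity $M_0(1-2^{-\alpha})\sum_\ell 2^{(1-\alpha)\ell}=M_0(2^\alpha-1)/(2^\alpha-2)$, matches the paper's calculation specialized to the single-level randomization.
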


\begin{proof}
It is sufficient to bound the second moments of $U_{\mathrm{RU}}$ and $V_{\mathrm{RU}}$. First,  by taking $w_\ell = (1-p)^\ell p = (1-2^{-\alpha})2^{-\alpha\ell}$ and using \Cref{theorem:var_delta_loss_query}, we have
    \begin{align*}
\mathrm{Var}\left[U_{\mathrm{RU}}\right]&\le \mathbb{E}\left[\left(\frac{\Delta \psi_{\phi,L}}{w_L}\right)^2\right] = \sum_{\ell=0}^\infty \frac{\mathbb{E}\left[\Delta \psi_{\phi,\ell}^2\right]}{w_\ell}\le \frac{A}{(1-2^{-\alpha})}\sum_{\ell=0}^\infty2^{(\alpha-r_1)\ell}\\
&=\frac{A}{(1-2^{\alpha-r_1})(1-2^{-\alpha})}.
\end{align*}
Bounding $\mathrm{Var}[V_{\mathrm{RU}}]$ follows analogously from \Cref{theorem:var_delta_grad_query}. 

As for the average cost, we have the following
    \begin{align*}
       \mathrm{Cost}_{\mathrm{RU}} \propto \sum_{\ell=0}^\infty w_\ell M_\ell=M_0(1-2^{-\alpha})\sum_{\ell=0}^\infty 2^{(1-\alpha)\ell}=M_0\frac{2^\alpha-1}{2^\alpha-2}.
    \end{align*}
\end{proof}

\Cref{theorem:ru_prop_collection} indicates that a large $\alpha$ leads to less expected total computational burden, but larger variance for the estimators of loss and its gradient.

\subsubsection{GRR-MLMC}
RU-MLMC methods can at times suffer from excessive variance, which in turn degrades convergence behavior \cite{goda2022unbiased,he2022unbiased}. To address this, we employ alternative schemes that reduce variance while preserving unbiasedness. The Russian Roulette (RR) estimator \cite{lyne2015russian} is also employed to estimate the sum of an infinite series, wherein the evaluation of any term in the series only demands a finite amount of computation. This estimator relies on randomized truncation and assigns a higher weight to each term to accommodate the possibility of not computing them.
The gradient estimator of RR-MLMC is given by
\begin{equation}\label{eq:rr_query_grad}
    V_{\mathrm{RR}} := \sum_{j=0}^{L}\frac{\Delta \rho_{\phi,j}}{p_{j}},
\end{equation}
where $p_{j} =\mathbb{P}(L\geq j)=\sum_{\ell=j}^\infty w_\ell$ for $j\ge 0$.
If all $p_j> 0$, then $V_{\mathrm{RR}}$ is unbiased since
\begin{align*}
    \mathbb{E}[V_{\mathrm{RR}}] &= \sum_{\ell=0}^\infty \mathbb{P}(L=\ell)\sum_{j=0}^{\ell}\frac{\mathbb{E}[\Delta \rho_{\phi,j}]}{p_{j}}= \sum_{j=0}^\infty \frac{\mathbb{E}[\Delta \rho_{\phi,j}]}{p_{j}}\sum_{\ell\ge j}^\infty \mathbb{P}(L=\ell)\\
    &=\sum_{j=0}^\infty \mathbb{E}[\Delta \rho_{\phi,j}]=\nabla_\phi \mathcal{L}(\phi).
\end{align*}

Motivated by \cite{luo2019sumo}, to trade lower variance for higher cost based on the conventional RR-MLMC, one way is to ensure that the first ${\underline{m}}$ terms of the infinite series are always computed, where this ${\underline{m}}$ is called the $\textit{base level}$. In this case, the random index $L$ is set to have a lower bound ${\underline{m}}$, i.e., $\mathbb{P}(L\ge {\underline{m}})=1$, implying $p_j=1$ for all $j\le {\underline{m}}$. We call this modified estimator the generalized Russian roulette (GRR) estimator. When ${\underline{m}}=0$, this degenerates to the conventional RR-MLMC. The associated query for the GRR-MLMC estimator of the gradient is
\begin{align}
    V_{\mathrm{GRR}} := \rho_{\phi,M_{{\underline{m}}}}+\sum_{j={\underline{m}}+1}^{L}\frac{\Delta\rho_{\phi,j}}{p_{j}} \label{eq:query_grr_grad}.
\end{align}
In this case, the probability mass function of $L$ is chosen as 
$\mathbb{P}(L={\underline{m}})=1-\sum_{\ell>{\underline{m}}} w_\ell$ and $\mathbb{P}(L=\ell)=w_\ell$ for $\ell >{\underline{m}}$, where $w_\ell = (1-p)^\ell p$ and $p = 1-2^{-\alpha}$. We denote this distribution by $\mathrm{Ge}(p,{\underline{m}})$. 

Similarly, we define the query for the GRR estimator of the loss function:
\begin{align}
    U_{\mathrm{GRR}} := \psi_{\phi,M_{{\underline{m}}}}+\sum_{j={\underline{m}}+1}^{L}\frac{\Delta\psi_{\phi,j}}{p_{j}} \label{eq:query_grr_loss}.
\end{align}

We next provide an upper bound for the variance of the loss $U_{\mathrm{GRR}}$. It is sufficient to bound $\mathrm{Var}\left[\psi_{\phi,M_{{\underline{m}}}}\right]$ and $\mathrm{Var}\left[\sum_{j={\underline{m}}+1}^{L}\frac{\Delta\psi_{\phi,j}}{p_{j}}\right]$. Directly applying \Cref{theorem:var_delta_loss_query} and the law of total expectation, we have 
\begin{align*}
\mathrm{Var}\left[\sum_{j={\underline{m}}+1}^{L}\frac{\Delta\psi_{\phi,j}}{p_{j}}\right]&\le \mbe\left[\left(\sum_{j={\underline{m}}+1}^{L}\frac{\Delta\psi_{\phi,j}}{p_{j}}\right)^2\right]
\le \sum_{\ell = {\underline{m}}+1}^\infty w_\ell\mbe\left[\left(\sum_{j={\underline{m}}+1}^{\ell}\frac{\Delta\psi_{\phi,j}}{p_{j}}\right)^2\right]\\&\le \sum_{\ell = {\underline{m}}+1}^\infty (\ell-\underline{m})w_\ell\sum_{j={\underline{m}}+1}^{\ell}p_j^{-2}\mbe\left[\Delta\psi_{\phi,j}^2\right]\\&\le B_{1}\sum_{\ell = {\underline{m}}+1}^\infty (\ell-\underline{m})w_\ell\sum_{j={\underline{m}}+1}^{\ell}2^{(2\alpha-r_{1})j}\\&\le\frac{B_{1}p2^{(2\alpha-r_1)(\underline{m}+1)}}{1-2^{2\alpha-r_{1}}}\sum_{\ell = {\underline{m}}+1}^\infty (\ell-\underline{m})2^{-\alpha\ell}(1-2^{(2\alpha-r_{1})(\ell-\underline{m})})\\&\le\frac{B_{1}p2^{(\alpha-r_{1})\underline{m}}}{1-2^{-2\alpha+r_{1}}}\left(\sum_{\ell=1}^\infty \ell 2^{-(r_{1}-\alpha)\ell}-\sum_{\ell=1}^\infty \ell 2^{-\alpha\ell}\right)\\&\le\frac{B_{1}p2^{(\alpha-r_{1})\underline{m}}}{1-2^{-2\alpha+r_{1}}}\left(\frac{2^{-(r_{1}-\alpha)}}{(1-2^{-(r_{1}-\alpha)})^2}-\frac{2^{-\alpha}}{(1-2^{-\alpha})^2}\right)\\&\le\frac{B_{1}(1-2^{-\alpha})2^{(\alpha-r_{1})(\underline{m}+1)}}{(1-2^{-2\alpha+r_{1}})(1-2^{\alpha-r_{1}})^2},
\end{align*}
where $ B_1$ represents the implied constant in the upper bound of the second moment in \Cref{theorem:var_delta_loss_query}. Note that
\begin{align*}
    \mathrm{Var}\left[\psi_{\phi,M_{\underline{m}}}\right]\leq 2\mathrm{Var}\left[\log g_\phi(x,\theta)\right]+2\mathrm{Var}\left[\log \hat{Z}_{M_{\underline{m}}}(x,\phi)\right],
\end{align*}
and $\mathrm{Var}\left[\log \hat{Z}_{M_{\underline{m}}}(x,\phi)\right]\le A_12^{-\underline{m}}$ for a constant $A_1>0$, as shown in \cite{ryan2003estimating}. Finally,
\begin{align*}
    \mathrm{Var}\left[U_{\mathrm{GRR}}\right] \leq 2\mathrm{Var}\left[\log g_\phi(x,\theta)\right]+ A_12^{-\underline{m}} +  \frac{2B_1(1-2^{-\alpha})2^{(\alpha-r_{1})(\underline{m}+1)}}{(1-2^{-2\alpha+r_{1}})(1-2^{\alpha-r_{1}})^2}.
\end{align*}
We can similarly bound $\mathrm{Var}\left[V_{\mathrm{GRR}}\right]$ by replacing $r_1$ with $r_2$ from \Cref{theorem:var_delta_grad_query}, yielding
\begin{align*}
\mathrm{Var}\left[V_{\mathrm{GRR}}\right] \leq 2\mathrm{Var}\left[\nabla_\phi\log g_\phi(x,\theta)\right] + A_22^{-\underline{m}} +  \frac{2B_2(1-2^{-\alpha})2^{(\alpha-r_{2})(\underline{m}+1)}}{(1-2^{-2\alpha+r_{2}})(1-2^{\alpha-r_{2}})^2} ,
\end{align*}
where $A_2>0,B_2>0$ are constants independently of $\underline{m},\alpha,r_2$.

For the cost, we have the following
\begin{align*}
\mathrm{Cost}_{\mathrm{GRR}} &\propto M_02^{{\underline{m}}}+\sum_{\ell = {\underline{m}}+1}^\infty w_{\ell}\sum_{\ell'={\underline{m}}+1}^{\ell} M_02^{\ell'}\\
& = M_02^{{\underline{m}}}+M_0(1-2^{-\alpha})\sum_{\ell = {\underline{m}}+1}^\infty2^{-\alpha\ell}\left(2^{\ell+1}-2^{{\underline{m}}+1}\right)\\
& = M_0 2^{\underline{m}}+M_0\frac{2^{(1-\alpha)({\underline{m}}+1)}}{1-2^{(1-\alpha)}}.
\end{align*}

As expected, a larger $\underline{m}$ leads to lower variances, but it increases the costs. On the other hand, for a fixed $\underline{m}$, a larger $\alpha$ leads to a lower cost, but it raises the variances.

\subsection{Truncated MLMC for APT}
In practice, a truncated variant of the unbiased method can be preferable, which introduces a small, controlled bias that yields lower variance while also bounding computational cost and peak GPU memory, making the estimators more stable under fixed compute budgets. When treating $\nabla_\phi \mathcal{L}(\phi)$ as the summation of the infinite random variable sequence $\{\Delta \rho_{\phi,\ell}\}_{\ell=0}^\infty$, we notice that $\Delta \rho_{\phi,0} = \rho_{\phi,M_0}$ is just the nested APT estimator with a bias of order $\mco(1/M_0)$. For $\Delta \rho_{\phi,\ell}$ with $\ell \geq 1$, it actually contributes to the reduction of bias while simultaneously increasing the variance, and this additional variance grows with $\ell$. An empirical demonstration of the additional variance for $\Delta \rho_{\phi,\ell}$ is shown in \Cref{fig:loss_demo}. 
\begin{figure}[htbp]
\includegraphics[width=0.9\linewidth]{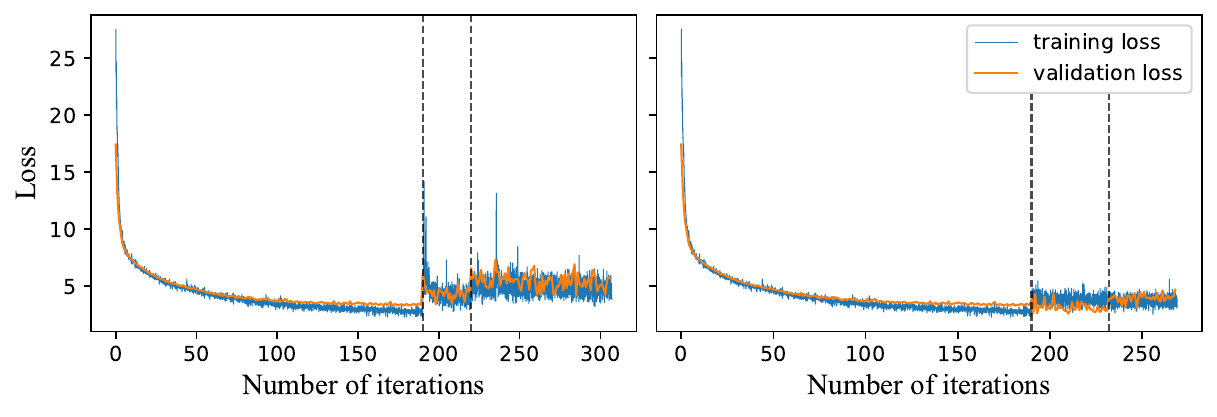}
\caption{\textbf{Left}: Crude RU-MLMC after the third round. \textbf{Right}: The truncated RU-MLMC ($\overline{m}=4$) after the third round.}
\label{fig:loss_demo}
\vspace{-1em}
\end{figure}
We observe that the truncated RU-MLMC method enjoys a more stable loss than the original one, indicating that truncation helps reduce variance. Therefore, to mitigate the variance of the RU-MLMC, we propose to truncate the distribution of $L$ by setting the largest value of $L$ as $\overline{m}$. The truncation leads to a bias of order $\mco(1/M_{{\overline{m}}})$. Following \cite{hu2021bias}, we extend the truncated idea to other unbiased MLMC methods.

\subsubsection{TGRR-MLMC}
TGRR-MLMC stands for the truncated version of GR\\R-MLMC. Let ${\overline{m}}\geq {\underline{m}}$ be a truncated level, and let $L$ be a  random index taking values in $\{{\underline{m}},\dots,{\overline{m}}\}$. The query for the estimation of the gradient of  TGRR-MLMC is then defined as
\begin{align}
    V_{\mathrm{TGRR}} :=\rho_{\phi,M_{{\underline{m}}}}+\sum_{j={\underline{m}}+1}^{L}\frac{\Delta\rho_{\phi,j}}{p_{j}}\label{eq:query_tgrr_grad},
\end{align}
where $p_{j} =\mathbb{P}(L\geq j)>0$ for ${\underline{m}}\le j\le  {\overline{m}}$.
This estimator is biased:
\begin{align*}
    \mathbb{E}[V_{\mathrm{TGRR}}] &= \mathbb{E}[\rho_{\phi,M_{{\underline{m}}}}]+\sum_{\ell={\underline{m}}+1}^{{\overline{m}}} \mathbb{P}(\tilde L=\ell)\sum_{j={\underline{m}}+1}^{\ell}\frac{\mathbb{E}[\Delta \rho_{\phi,j}]}{p_{j}}\\&= \mathbb{E}[\rho_{\phi,M_{{\underline{m}}}}]+\sum_{j={\underline{m}}+1}^{{\overline{m}}}\frac{\mathbb{E}[\Delta \rho_{\phi,j}]}{p_{j}}\sum_{\ell\ge j}^{{\overline{m}}} \mathbb{P}(\tilde L=\ell)\\
    &=\mathbb{E}[\rho_{\phi,M_{{\underline{m}}}}]+\sum_{j={\underline{m}}+1}^{{\overline{m}}} \mathbb{E}[\Delta \rho_{\phi,j}]=\mathbb{E}[ \rho_{\phi,M_{{\overline{m}}}}].
\end{align*}
We should note that if ${\overline{m}}={\underline{m}} =m$, TGRR-MLMC degenerates to nested APT with the inner sample size $M_m = M_02^m$. All truncated methods with truncated level $\overline{m}$ share the bias of order $\mathcal{O}(2^{-\overline{m}})$.

For TGRR-MLMC, the probability mass function of the random index $L$ is chosen as
\begin{equation}\label{neww}
    \mathbb{P}(L=\ell) = 
    \begin{cases}    
    1-\frac{\sum_{\ell={\underline{m}}+1}^{{\overline{m}}} w_\ell}{1-(1-p)^{{\overline{m}}+1}},\ &\ell={\underline{m}}\\
    \frac{w_\ell}{1-(1-p)^{{\overline{m}}+1}},\ &{\underline{m}+1}\le\ell\le {\overline{m}}\\
    0,\ &\text{otherwise},
    \end{cases}      
\end{equation}
where $w_\ell = (1-p)^\ell p$ and $p = 1-2^{-\alpha}$. We denote this distribution by $\mathrm{Ge}(p,{\underline{m}},{\overline{m}})$. The whole procedure of the TGRR-MLMC method is summarized in Algorithm~\ref{alg:TGRR}.

\begin{algorithm2e}[htbp]
\SetEndCharOfAlgoLine{}
\caption{Truncated Generalized Russian Roulette (TGRR)}\label{alg:TGRR}
\KwIn{
Prior $p(\theta)$, 
implicit simulator model $p(x|\theta)$, 
conditional density estimator $q_{F(x,\phi)}(\theta)$, 
$\alpha\in(1,\min(r_1,r_2))$, 
$M_0$, $M_\ell = M_02^\ell$, 
truncated level ${\overline{m}}$, 
base level ${\underline{m}}$, 
total round $R$, 
optimizer $\mathrm{opt}(\cdot,\cdot)$, 
batch size $B$, 
outer sample size $N$ 
}
\textbf{Initialization:}
Set iteration step $t \leftarrow 0$ and initial network parameter $\phi_0$, 
initialize outer sample dataset $\mathcal{D}_{\mathrm{out}} \leftarrow \emptyset$, 
set proposal $\tilde{p}_1(\theta) \leftarrow p(\theta)$

\For{\RouLoop}{
Generate $\{(\theta_i,x_i)\}_{i=1}^N$ from $\tilde{p}_k(\theta)p(x|\theta)$

Update dataset $\mathcal{D}_{\mathrm{out}} \leftarrow \{(\theta_i,x_i)\}_{i=1}^{N} \cup \mathcal{D}_{\mathrm{out}}$

\If{$k=1$}{
\Repeat{\nonstop}{
Sample $\{(\tilde \theta_i,\tilde{x}_i)\}_{i=1}^B$ from $\mathcal{D}_{\mathrm{out}}$ without replacement

Update parameter $\phi_{t+1} \leftarrow \mathrm{opt}\left(\phi_t, -\frac{1}{B}\sum_{i=1}^B \nabla_{\phi_t} \log q_{F(\tilde x_i,\phi_t)}(\tilde\theta_i)\right)$

$t \leftarrow t+1$
}
}

\If{$k\ge2$}{
\Repeat{\nonstop}{
Sample $\{(\tilde \theta_i,\tilde{x}_i)\}_{i=1}^B$ from $\mathcal{D}_{\mathrm{out}}$ without replacement

Sample level $\{\ell_i\}_{i=1}^B$ from $\mathrm{Ge}(1-2^{-\alpha},{\underline{m}}, {\overline{m}})$

\For{\BatLoop}{
Sample $\{\theta'_j\}_{j=1}^{M_{{\underline{m}}}}$ from those $\theta$ in $\mathcal{D}_{\mathrm{out}}$ without replacement

Compute $\rho^{(i)}_{\phi_t,M_{\underline{m}}}=-\nabla_{\phi_t}\log g_{\phi_t}(\tilde x_i,\tilde \theta_i)+\nabla_{\phi_t} \log \frac{1}{M_{\underline{m}}}\sum_{j=1}^{M_{\underline{m}}} g_{\phi_t}(\tilde x_i,\theta'_j)$

\For{\Addloop}{
Sample $\{\theta'_j\}_{j=1}^{M_{\ell}}$ from those $\theta$ in $\mathcal{D}_{\mathrm{out}}$ without replacement

Compute $\Delta \rho^{(i)}_{\phi_t,\ell}$ by \eqref{eq:antit_delta_grad_query} with $x=\tilde{x}_i$, $\phi=\phi_t$ and $\{\theta'_j\}_{j=1}^{M_{\ell}}$
}
Compute the gradient $V_{\mathrm{TGRR}}^{(i)}$ using \eqref{eq:query_tgrr_grad}
}

Update parameter $\phi_{t+1} \leftarrow\mathrm{opt}\left(\phi_t, \frac{1}{B}\sum_{i=1}^B V_{\mathrm{TGRR}}^{(i)}\right)$

$t \leftarrow t+1$
}
}
Update proposal $\tilde{p}_{k+1}(\theta) \leftarrow q_{F(x_o,\phi_t)}(\theta)$
}
\end{algorithm2e}

In this section, we develop three MLMC estimators for the nested APT objective: two unbiased (RU-MLMC, GRR-MLMC) and one biased (TGRR-MLMC). RU-MLMC has the lowest overhead but higher variance; GRR-MLMC reduces variance by deterministically computing the first $\underline{m}$ levels before randomized truncation; TGRR-MLMC deterministically truncates the tail to control runtime and GPU memory, introducing a small, controllable bias, retaining a favorable variance–cost behavior. These methods expose clear bias–variance–cost trade-offs, making each preferable under different compute regimes.

\section{Convergence results of SGD}\label{section:conver_sgd}
In this section, we analyze the convergence of the methods proposed in this paper with some existing techniques. For simplicity, we only study the widely used SGD with constant step size $\gamma$ in this paper. Our results should extend to broader cases. The parameter update procedure of SGD is presented as
\begin{equation}\label{eq:sgd}
\phi_{t+1} = \phi_t-\gamma\nabla_\phi\hat{\mathcal{L}}(\phi_t),
\end{equation}
leveraging the information of an estimator of the gradient of the loss function at the current state $\phi_t$. The gradient estimator for TGRR-MLMC is
\begin{align*}
    \nabla_\phi\hat{\mathcal{L}}^{\text{TGRR}}(\phi) :=\frac{1}{N}\sum_{i=1}^N V_{\mathrm{TGRR}}^{(i)},
\end{align*}
where $V_{\mathrm{TGRR}}^{(i)}$ are iid copies of $V_{\mathrm{TGRR}}$. The gradient estimators for other MLMC methods follow a similar form. The gradient estimator for nested APT is stated previously in \eqref{eq:emp_bias_apt_grad}. Since some of them are biased estimators for $\nabla_\phi \mathcal{L}(\phi)$, we take the following decomposition
\begin{equation}
\nabla_\phi\hat{\mathcal{L}}(\phi) = \nabla_\phi \mathcal{L}(\phi)+b(\phi)+\eta(\phi),
\end{equation}
where $b(\phi)=\mathbb{E}\left[\nabla_\phi\hat{\mathcal{L}}(\phi)\right]-\nabla_\phi \mathcal{L}(\phi)$ and $\eta(\phi)=\nabla_\phi\hat{\mathcal{L}}(\phi)-\mathbb{E}\left[\nabla_\phi\hat{\mathcal{L}}(\phi)\right]$ denote the bias and the noise of gradient estimator by $\nabla_\phi \hat{\mathcal{L}}(\phi)$, respectively. 

\begin{ass}\label{ass:upp_bias_noise}
Assume that
    \begin{align*}
    U_b&:=\underset{\phi \in \Phi}{\sup}\left\|b(\phi)\right\|_2^2 <\infty, \quad
    U_\eta:=\underset{\phi \in \Phi}{\sup}\ \mathbb{E}\left[\left\|\eta(\phi)\right\|_2^2\right] <\infty.
\end{align*}
\end{ass}

We focus on studying how $U_b$ and $U_\eta$ would affect the \textit{optimal gap} at any $t>0$, which is defined as
\begin{align}\label{eq:optimal_gap}
    G_t = \mathbb{E}\left[\mathcal{L}(\phi_t)\right]-\mathcal{L}(\phi^*),
\end{align}
where $\phi^* = \underset{\phi \in \Phi}{\arg\min}\, \mathcal{L}(\phi)$. In situations where a fixed number of iterations is exclusively employed for parameter updates, the evaluation of the impact of variance and bias on the upper bound of the optimal gap is undertaken through the application of the theorem presented below, following \cite{ajalloeian2020convergence}.

\begin{theorem}\label{thm:SGD_opt_gap}
    Assume that \Cref{ass:upp_bias_noise} is satisfied and  there exist positive constants $K,\mu$ such that for every $\phi,\phi'\in \Phi$,
\begin{align}
    \mathcal{L}(\phi)&\leq \mathcal{L}(\phi')+ \nabla_\phi \mathcal{L}(\phi)^T(\phi-\phi')+\frac{K}{2}\left\|\phi-\phi'\right\|^2_2,\label{ass1}\\
    \left\|\nabla_\phi \mathcal{L}(\phi)\right\|_2^2&\geq 2\mu(\mathcal{L}(\phi)-\mathcal{L}(\phi^*)).\label{ass2}
\end{align} If the SGD algorithm \eqref{eq:sgd} has a step size $\gamma\leq \min(1/K, 1/\mu)$, for a fixed $T$ iterations, the upper bound of the optimal gap \eqref{eq:optimal_gap} is given by
\begin{align}\label{eq:decomp_opt_gap}
        G_T \leq (1-\gamma\mu)^{T}G_0+ \frac{1}{2\mu}\left(U_b+U_\eta\right).
    \end{align}
If $\gamma = \min \left\{1/K,\epsilon \mu/(KU_{\eta})\right\}<1/\mu$ and $$T = \max\left\{(K/\mu)\ln(2G_0/\epsilon), KU_{\eta}/(\epsilon \mu^2)\ln(2G_0/{\epsilon})\right\},$$
then the optimal gap satisfies $G_T \leq \epsilon+U_b/(2\mu)$ for any $\epsilon>0$.
\end{theorem}

\begin{proof}
This is a special case of Theorem 6 in \cite{ajalloeian2020convergence} by taking $m=M=0$ there.
\end{proof}

Since $G_0$ depends solely on the initialization of $\phi$, our focus lies on the last two terms of \eqref{eq:decomp_opt_gap}, where $U_b$ and $U_\eta$ have the same impact on the optimal gap $G_T$. This implies that unbiased methods with large variance may ultimately achieve the same result as those biased ones at the same iteration step $T$. 
Hence, in cases where variance dominates bias, methods with smaller variance are preferable. 
On the other hand, when a sufficiently large iteration step $T$ is given and a proper learning rate $\gamma$ is chosen, the upper bound of the optimal gap can be smaller than any given $\epsilon>0$ except for the bias, which is examined in the second part of  \cref{thm:SGD_opt_gap}. 
In this case, unbiased methods are more favorable, as they can converge to an arbitrarily small neighborhood of 0, while biased methods can only converge to a neighborhood of the $U_b/(2\mu)$. 
Moreover, for biased methods with $U_b \gg U_\eta$, the optimal gap for $t_1 \gg t_2$ tends to be $G_{t_1} \approx G_{t_2}$, indicating that increasing the iteration step can be futile in attempting to reduce the optimal gap.

\section{Numerical experiments}\label{sec:UBAPT_exp}
In this section, numerical experiments compare the performance of two unbiased MLMC methods and a biased MLMC method for three models. To handle high-dimensional datasets, it is common to use informative low-dimensional summary statistics of the datasets instead. However, this introduces additional bias, as detailed in \cite{fearnhead2012constructing}. In this paper, we do not look into the effects of using summary statistics for SNPE methods. We refer to Fearnhead and Prangle \cite{fearnhead2012constructing} for a semi-automatic method of constructing summary statistics in the context of ABC.

\textbf{A toy example.} Two-Moon model was studied in Greenberg et al. \cite{apt}. For a given parameter $\theta\in \mathbb{R}^2$, the Two-Moon simulator generates observations $x\in \mathbb{R}^2$ via
\begin{align*}
    a \sim \mathcal{U}(-\pi/2,\pi/2)&,\quad r_2 \sim \mathcal{N}(0.1,0.01^2),\\
    p  = (r_2\cos(a)+0.25,r_2\sin(a))&,\quad x = p+ \left(-\frac{|\theta_1+\theta_2|}{\sqrt{2}},\frac{-\theta_1+\theta_2}{\sqrt{2}}\right).
\end{align*}
The intermediate variables $p$ follow a single crescent-shaped distribution, which is then shifted and rotated around the origin based on the parameter values of $\theta$. The absolute value $|\theta_1+\theta_2|$ contributes to the emergence of a second crescent in the posterior distribution.
We choose a uniform prior over the square $[-1,1]^2$ to perform the inference.  The  observation for this task is $x_o=(0,0)$ following \cite{apt}, the ground truth parameters are $\theta^*=(0.2475,0.2475).$

\textbf{Lotka-Volterra model.}
This model describes the continuous time evolution of a population of predators interacting with a population of prey using a stochastic Markov jump process. The model describes that the birth of a predator at a rate $\exp(\theta_1)XY$, resulting in an increase of $X$ by one. The death of a predator at a rate proportional to $\exp(\theta_2)X$, leading to a decrease of $X$ by one. The birth of a prey at a rate proportional to $\exp(\theta_3)Y$, resulting in an increase of $Y$ by one. The consumption of a prey by a predator at a rate proportional to $\exp(\theta_4)XY$, leading to a decrease of $Y$ by one. Following the experimental details outlined in \cite{papamakarios2019sequential}, we initialize the predator and prey populations as $X = 50$ and $Y = 100$, respectively. We conduct simulations of the Lotka-Volterra model using  Gillespie's algorithm \cite{gillespie1977exact} throughout 30 time units. We recorded the populations at intervals of 0.2 time units, resulting in time series data sets, each consisting of 151 values. The resulting summary statistics $S(x)$ are represented as a 9-dimensional vector, which includes the following time series features: the logarithm of the mean of each time series, the logarithm of the variance of each time series, the auto-correlation coefficient of each time series at lags of 0.2 and 0.4 time units, and the cross-correlation coefficient between the two time series. In our experiments, the prior distribution of the parameters is set to $\mathcal{U}(-5, 2)^4$, and we generate the ground truth posterior with SMC-ABC \cite{beaumont2009adaptive}, which is more costly than the methods in this paper. The ground truth parameters are
\begin{equation*}
\theta^* = (\log 0.01,\  \log 0.5, \  \log 1, \  \log 0.01),
\end{equation*}
the observed summary statistics $S(x_o)$ simulated from the model with the ground truth parameters $\theta^*$ are
\begin{equation*}
S(x_o) = (4.6431,\ 4.0170,\ 7.1992,\ 6.6024,\ 0.9765,\ 0.9237,\ 0.9712,\ 0.9078,\ 0.0476),
\end{equation*}
{and the corresponding standard deviation under the ground truth parameters $\theta^*$ (based on 10,000 simulations) is}
\begin{equation*}
s = (0.3294,\ 0.5483,\ 0.6285,\ 0.9639,\ 0.0091,\ 0.0222,\ 0.0107,\ 0.0224,\ 0.1823).
\end{equation*}

\textbf{M/G/1 queue model.} The M/G/1 queue model \cite{shestopaloff2014bayesian} describes a single server's processing of a queue of continuously arriving jobs. Define $I$ as the total number of jobs that need to be processed, and denote by $s_i$ the processing time required for job $i$. Let $v_i$ be the job's arrival time in the queue, and $d_i$ be the job's departure time from the queue. They satisfy the following conditions
\begin{align*}
s_i \sim \mathcal{U} (\theta_1, \theta_1 + \theta_2), \quad v_i - v_{i-1}  \sim \mathrm{Exp}(\theta_3), \quad d_i - d_{i-1}  = s_i + \max(0, v_i - d_{i-1}).
\end{align*}
In our experiments, we set $I = 50$ jobs and the summary statistics $S(x)$ have been selected as the logarithm of 0th, 25th, 50th, 75th, and 100th percentiles of the set of inter-departure times. The prior distribution of the parameters is given by
\begin{align*}
\theta_1 \sim \mathcal{U} (0, 10),\quad \theta_2 \sim \mathcal{U} (0, 10),\quad \theta_3 \sim \mathcal{U} (0, 1/3),
\end{align*}
and our experiments choose ground truth parameters as $\theta^* = (1,\ 4,\ 0.2).$ The observed summary statistic $S(x_o)$ simulated from the model with the ground truth parameters $\theta^*$ is
\begin{equation*}
S(x_o) = (0.0929,\ 0.8333,\ 1.4484,\ 1.9773,\ 3.1510),
\end{equation*}
and the corresponding standard deviation is
\begin{equation*}
s = (0.1049,\ 0.1336,\ 0.1006,\ 0.1893,\ 0.2918).
\end{equation*}

Our numerical experiments are performed on a computer equipped with a single GeForce RTX 2080 Super GPU and an i9-9900K CPU. The training and inference processes of the model are primarily implemented using the \texttt{PyTorch} package in \texttt{Python}.

\begin{figure}[htbp]
\begin{subfigure}[t]{0.03\textwidth}
  \vspace{2pt}
    \textbf{A}
\end{subfigure}
\hspace{0.02\textwidth}
\begin{subfigure}[t]{0.85\textwidth}
  \vspace{0pt}
    \includegraphics[width=\textwidth]{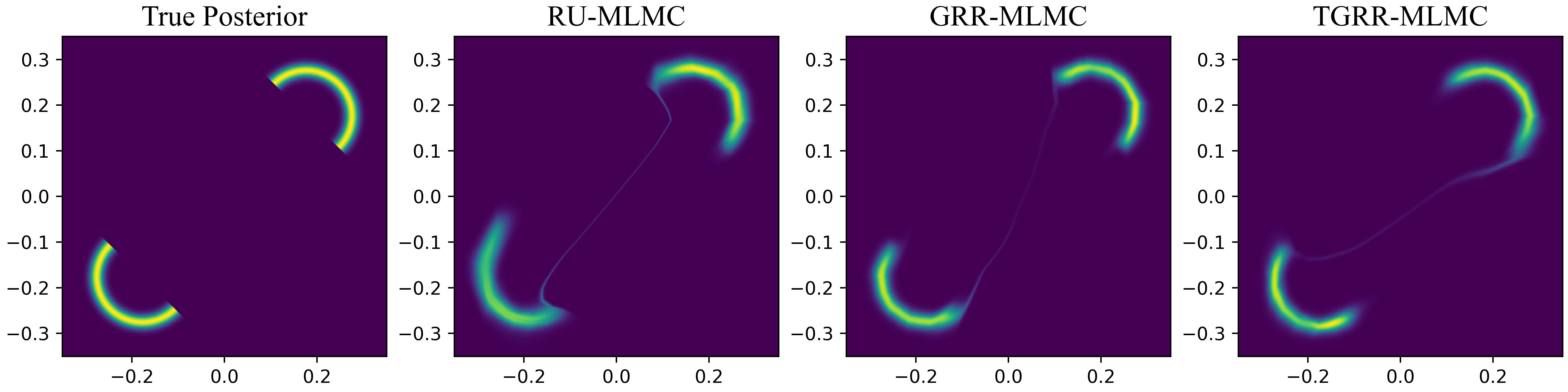}
\end{subfigure}

\begin{subfigure}[t]{0.03\textwidth}
  \vspace{1pt}
    \textbf{B}  
\end{subfigure}
\begin{subfigure}[t]{0.94\textwidth}
  \vspace{0pt}
    \includegraphics[width=\textwidth]{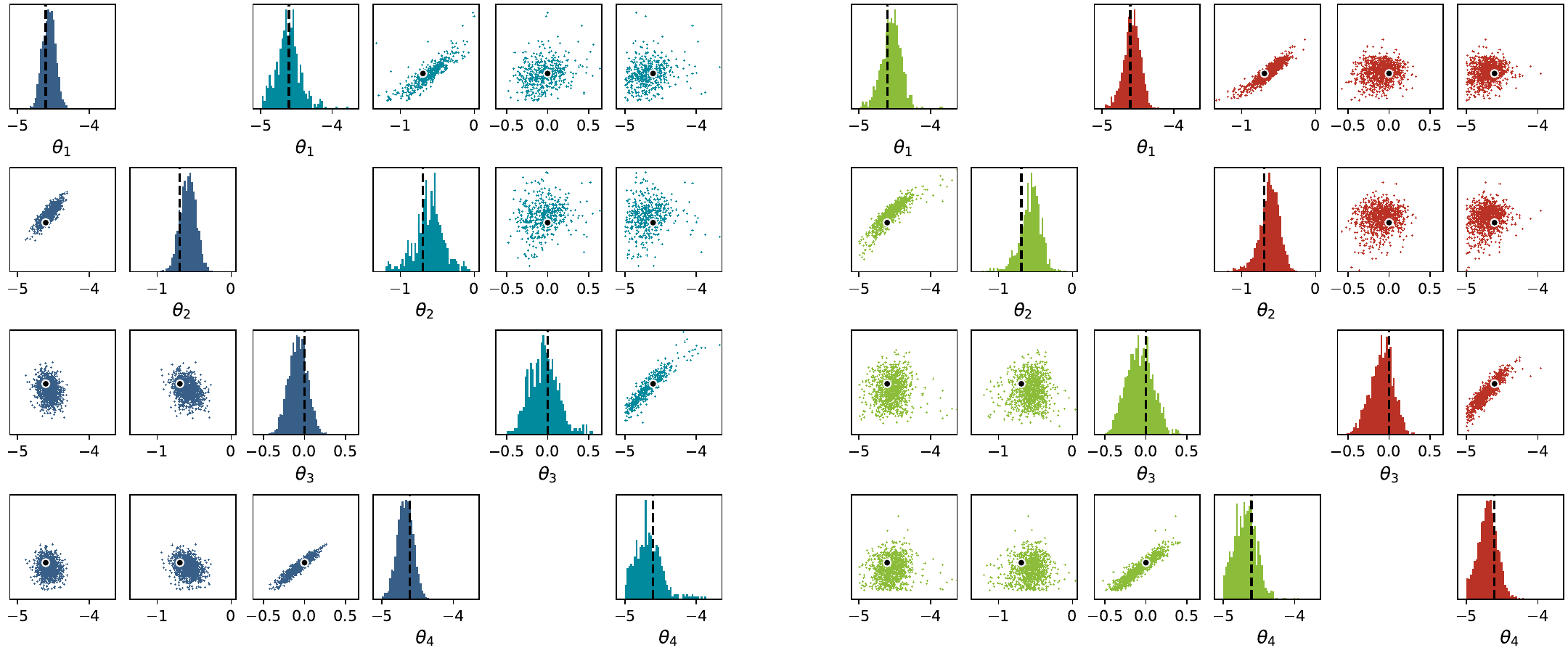}
\end{subfigure}

\begin{subfigure}[t]{0.03\textwidth}
  \vspace{1pt}
    \textbf{C}  
\end{subfigure}
\begin{subfigure}[t]{0.94\textwidth}
  \vspace{0pt}
    \includegraphics[width=\textwidth]{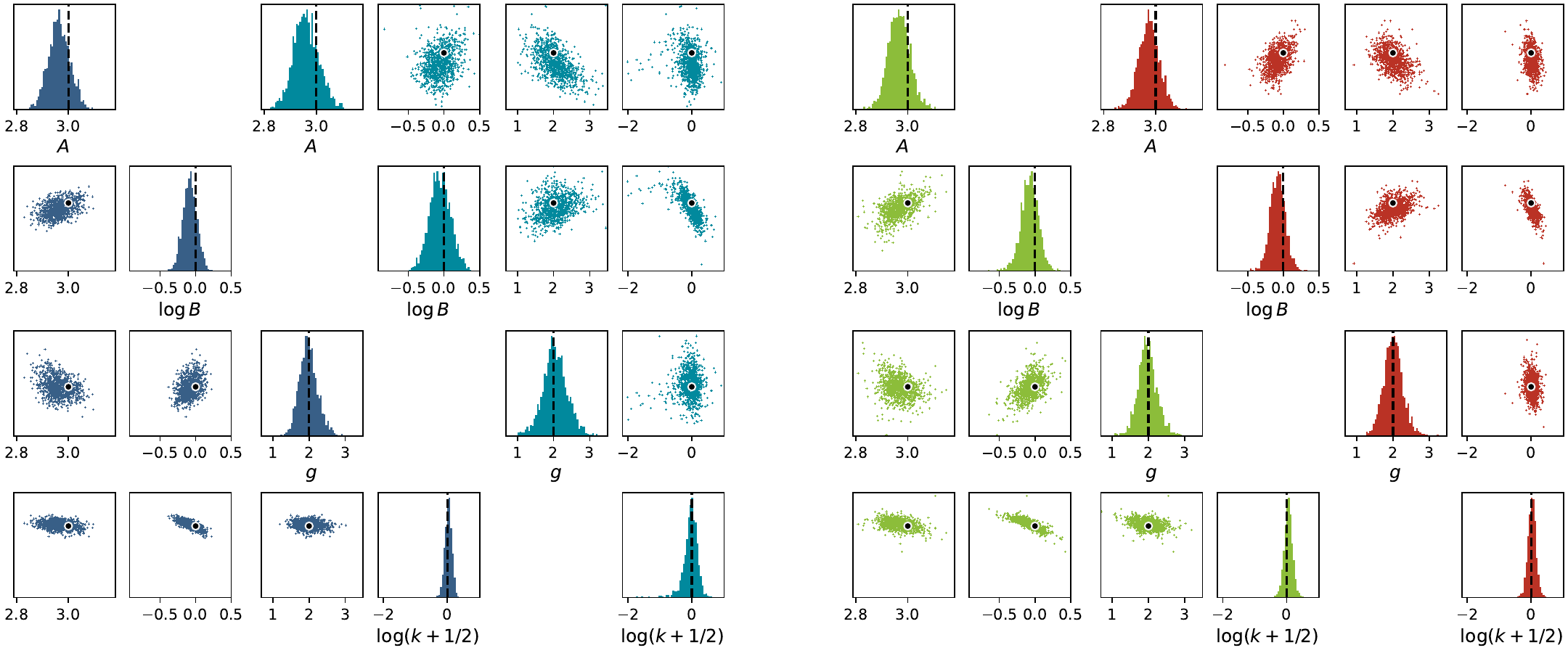}
\end{subfigure}
\caption{\textbf{Approximated posterior for RU-MLMC, GRR-MLMC and TGRR-MLMC} \textbf{A.} Two-Moon, from left to right: available ground truth, RU-MLMC, GRR-MLMC, and TGRR-MLMC. \textbf{B.} Lotka-Volterra, from left to right: ground truth simulated with SMC-ABC \cite{beaumont2009adaptive}, RU-MLMC, GRR-MLMC, and TGRR-MLMC. \textbf{C.} M/G/1 queue model, the setting is the same as Lotka-Volterra. We show a scatter plot for each of the 2D subspaces. The histogram for each $\theta_{i},i=1,\dots,4$  is plotted on the diagonal with the ground truth parameter marked with dotted lines.}
\label{fig:mlmc_compare_density}
\vspace{-1em}
\end{figure}

In the training process, we simulate $N = 1,000$ samples in each round, and with $R = 20$ rounds in total. In each round, we randomly pick 5\% of the newly generated samples $\theta$ and their corresponding $x$ values as validation data. We follow the early stopping criterion proposed by \cite{papamakarios2019sequential}, which terminates the training if the loss value on the validation data does not decrease after 20 epochs in a single round. For the optimizer, we use \texttt{Adam} \cite{kingma2014adam} with a batch size of 100, a learning rate of $1 \times 10^{-4}$, and a weight decay of $1 \times 10^{-4}$. The detailed track of the computational time of various methods can be found in \Cref{appendix:computation_cost}.

In this paper, we employ neural spline flows (NSFs) \cite{durkan2019neural} as the conditional density estimator, which consists of 8 layers. Each layer is constructed using two residual blocks with 50 units and \texttt{ReLU} activation function. With 10 bins in each monotonic piecewise rational-quadratic transform, the tail bound is set to 20. For a detailed introduction to the structure of NSFs, we refer the reader to \Cref{appendix:nsf_intro}.

We compare the performance of RU-MLMC, GRR-MLMC, and TGRR-MLMC on the Two-Moon, Lotka-Volterra, M/G/1 models. To assess the similarity between the approximate posterior distribution $q_{F(x_o, \phi)}(\theta)$ and the true posterior distribution $p(\theta|x_o)$ given the observed data, we employ maximum mean discrepancy (MMD) \cite{apt,gretton2012kernel,hermans2020likelihood,papamakarios2019sequential} and classifier two-sample tests (C2ST) \cite{dalmasso2020validation,gutmann2018likelihood,lopez2016revisiting} as discriminant criteria. Furthermore, we use the logarithmic median distance (LMD) to measure the distance between $x_o$ and $x$ drawn from $p(x|\theta)$, where $\theta$ is sampled from $q_{F(x_o, \phi)}(\theta)$. The negative log-density (NLOG) value of the ground truth parameter $\theta^*$ of the generated observation $x_o$ under the conditional density structure, i.e., $-\log q_{F(x_{o},\phi)}(\theta^*)$, is another useful metric. A lower value of NLOG indicates that, in the parameter space, the posterior density estimator has a higher probability density at the true parameter, which can be used to evaluate the model's accuracy in estimating the true parameter given the observed data \cite{durkan2019neural,apt,hermans2020likelihood,papamakarios2016fast,papamakarios2019sequential}.

For the setting of MLMC methods, we take $M_0 = 8$ and $\overline{m} = 4$ for TGRR-MLMC. The base level for GRR and TGRR is ${\underline{m}}=2$. 
Since the proposal distributions of each round are distinct, the hyperparameter $r_2$ in \Cref{theorem:var_delta_grad_query} differs in each round. To determine the value of $r_2$ for each round, except for the first two, one may take the value from the previous round. To address this issue, we have conducted 50 training processes and sequentially performed linear regression. From the results, we select a global value of $r_2 = 1.8$.

We now discuss the choice of the hyperparameter $\alpha$ for the geometric distribution of the level $L$. Unlike previous work where $\alpha$ is chosen to minimize the average cost $\mathrm{Cost}_{\mathrm{RU}}$ \cite{goda2019multilevelMCest,goda2022unbiased,he2022unbiased}, based on experiment results in \Cref{fig:loss_demo}, in addition to average cost, variance is another crucial factor that demands our attention. Both metrics matter in our evaluation and optimization processes. %
We now try to minimize the \textit{asymptotic inefficiency} \cite{jacob2017unbiased} defined as $H_{\mathrm{RU}} := \mathrm{Var}[V_{\mathrm{RU}}] \times \mathrm{Cost}_{\mathrm{RU}}$.
Utilizing \Cref{theorem:ru_prop_collection}, we find an upper bound for the asymptotic inefficiency
\begin{equation}\label{equation:data_ineff_ru_mlmc}
H_{\mathrm{RU}} \le \frac{B 2^{\alpha + r_2}}{\left(2^{\alpha} - 2\right) \left(2^{r_2} - 2^{\alpha}\right)}
\end{equation}
for some constant $B>0$.
As a result, the optimal $\alpha$ minimizing the upper bound is $\alpha_{\mathrm{RU}}^{*} = (r_2 + 1)/2\in(1, r_2)$. When ${\underline{m}} = 0$ in GRR-MLMC, which degenerates to RR-MLMC, we reach the same conclusion for $\alpha^*_{\mathrm{RR}}$. From this point of view, we can see that RR-MLMC actually trades lower variance for higher average cost compared with RU-MLMC. We apply the same procedure for other MLMC methods, and the corresponding optimal $\alpha$ are presented in \Cref{tab:value_of_opt_alpha}. For an ablation study on the choices of $\alpha$, we refer the readers to \Cref{app:alpha_ablation}.

\begin{table}[t]
\caption{The value of $\alpha$ minimizing the upper bound of asymptotic inefficiency}
    \centering
    \begin{tabular}{c|c|c|c}
    \hline
       \multirow{2}{*}{Method}   & RU-MLMC & GRR-MLMC  &TGRR-MLMC   \\
         &&$({\underline{m}}=2)$&$({\overline{m}}= 4)$\\\hline
         $\alpha\  (1<\alpha<r_2)$&1.4&1.209&1.673\\ \hline 
    \end{tabular}
    \label{tab:value_of_opt_alpha}
\end{table}

\begin{figure}[htbp]
\hspace{0.06\textwidth}
\begin{subfigure}[t]{0.05\textwidth}
  \vspace{3pt}
    \textbf{A}
\end{subfigure}
\begin{subfigure}[t]{0.75\textwidth}
  \vspace{0pt}
    \includegraphics[width=\textwidth]{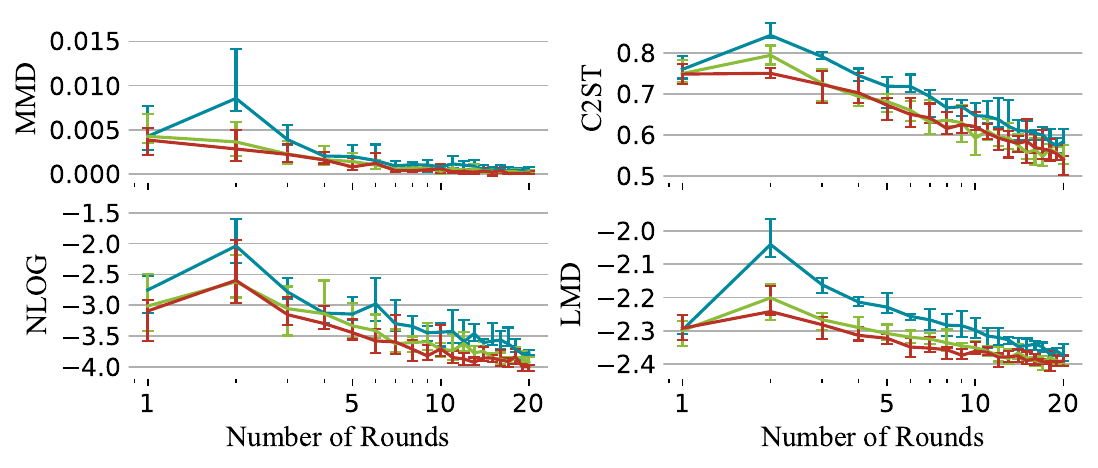}
\end{subfigure}

\hspace{0.06\textwidth}
\begin{subfigure}[t]{0.05\textwidth}
  \vspace{3pt}
    \textbf{B}  
\end{subfigure}
\begin{subfigure}[t]{0.75\textwidth}
  \vspace{0pt}
    \includegraphics[width=\textwidth]{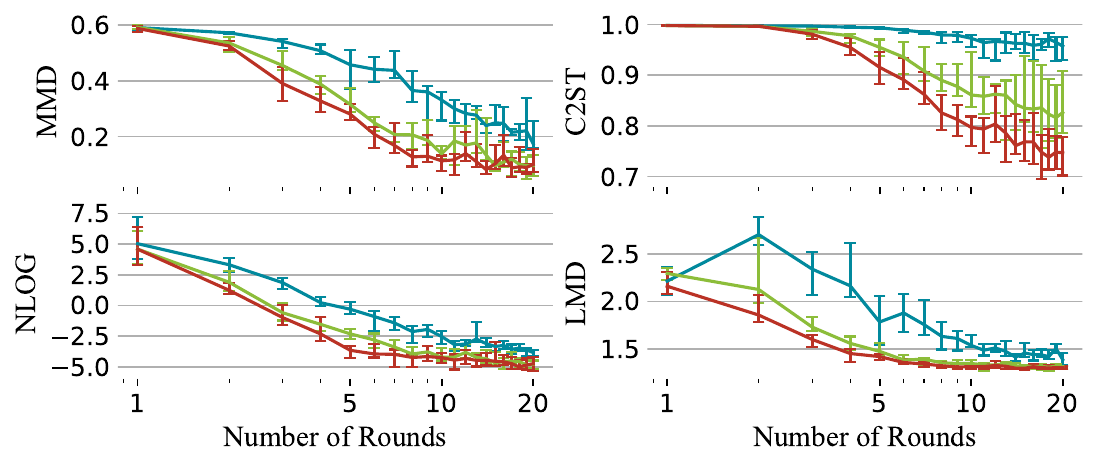}
\end{subfigure}

\hspace{0.06\textwidth}
\begin{subfigure}[t]{0.05\textwidth}
  \vspace{3pt}
    \textbf{C}  
\end{subfigure}
\begin{subfigure}[t]{0.80\textwidth}
  \vspace{3.2cm}
    \includegraphics[width=\textwidth]{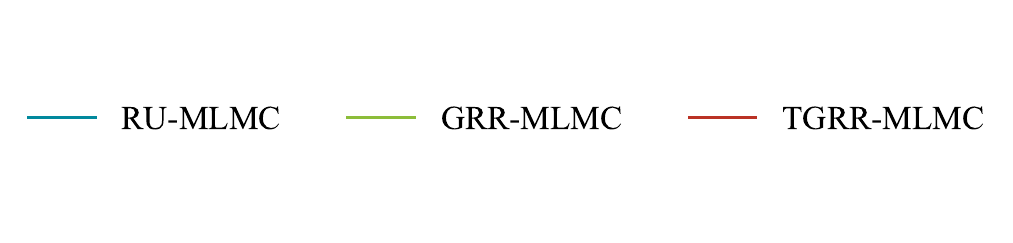}
\end{subfigure}
\hspace{-0.80\textwidth}
\vspace{-1.0cm}
\begin{subfigure}[t]{0.75\textwidth}
  \vspace{0pt}
    \includegraphics[width=\textwidth]{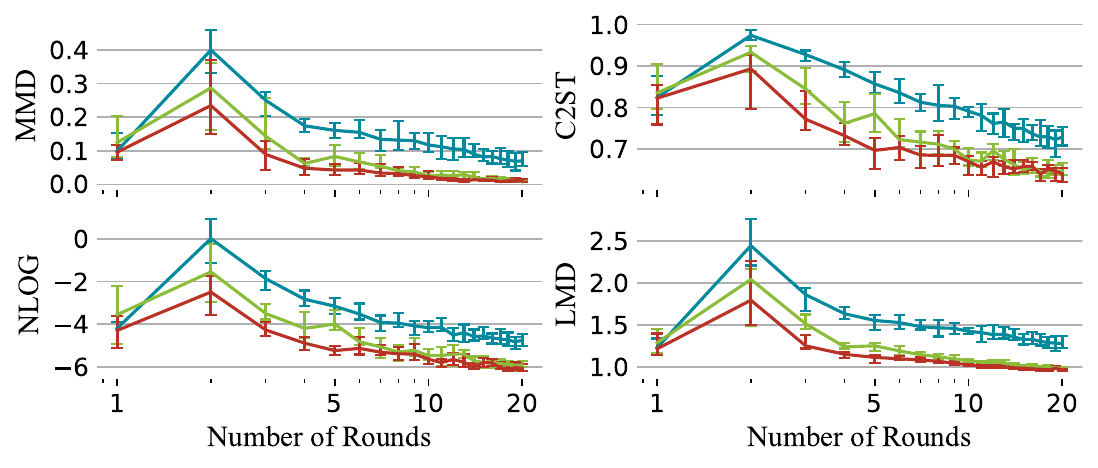}
\end{subfigure}
\caption{\textbf{Performance of RU-MLMC, GRR-MLMC and TGRR-MLMC} \textbf{A.} Two-Moon,  \textbf{B.} Lotka-Volterra \textbf{C.} M/G/1 queue model, blue, green, and red correspond to RU-MLMC, GRR-MLMC, and TGRR-MLMC, respectively.}
\label{fig:mlmc_compare_perf}
\vspace{-1em}
\end{figure}

We present our results for the baseline models in \Cref{fig:mlmc_compare_density} and \Cref{fig:mlmc_compare_perf}. We observe that both unbiased methods are inferior to the biased ones in some cases. When comparing RU-MLMC with GRR-MLMC in the case of the Lotka-Volterra and M/G/1 queue models, we find that the unbiased method can greatly benefit from variance reduction. In the case of Lotka-Volterra, when comparing TGRR and GRR methods, we conclude that the gradient information $\Delta \rho_{\phi,\ell}$ where $\ell \geq \overline{m}$ not only does not contribute to the overall performance but also makes it worse when measured with C2ST. As suggested by \Cref{thm:SGD_opt_gap}, this could be due to the domination of variance, where the effect of reducing variance is superior to reducing bias.

In the case of SNPE, where the complex model of $q_{F(x,\phi)}(\theta)$ is used for density estimation, variance tends to dominate the bias as it is suggested by \Cref{thm:SGD_opt_gap}, indicating that excessive variance could seriously affect the training process of the density estimator. Instead of seeking unbiasedness, one should try to strike a balance between bias, average computational cost, and variance of the gradient in this case.

Taking these empirical trends and the bias-variance trade-off, we conclude the following practical guidance on choosing an appropriate MLMC variant.
\begin{enumerate}
    \item Under tight computational budgets or complex tasks, TGRR-MLMC often yields the most stable training.
    \item When unbiasedness is required and the effect of variance is moderate, GRR-MLMC is a favorable option.
    \item When both unbiasedness and minimal cost are priorities, RU-MLMC is preferable.
\end{enumerate}
In our experiments, this pattern is reflected across tasks. It suffices to use  RU-MLMC for the simple Two-Moon model. GRR-MLMC performs well on the intermediate M/G/1 queue model. TGRR-MLMC is a better choice for the more challenging Lotka–Volterra model.

In \Cref{tab:compare_snse}, we also compare TGRR-MLMC with sequential neural score estimation (SNSE) proposed by \cite{sharrock2022sequential}, which is a state-of-the-art likelihood-free inference method. 
We find that TGRR-MLMC achieves a lower mean C2ST than SNSE on all three tasks; however, it requires 1.2–1.5× wall-clock time, as shown in \Cref{appendix:computation_cost}. Notably,
SNSE consumes up to 60 GB of GPU memory for a batch size of 128, whereas our method requires only 5 GB for a batch size of 500, demonstrating a substantial advantage in memory efficiency. For the hyperparameter of SNSE, we mainly follow the default setting from \cite{durkan2019neural} and change the maximum iteration to 1,000 and samples per round to 1,000 to align with our settings. 

To investigate how the proposed MLMC method scales to higher-dimensional problems, we infer the single-compartment Hodgkin-Huxley type neuron model \cite{pospischil2008minimal,gonccalves2020training} with 8 parameters of interest. The results are deferred to \Cref{app:HH_model}. 

\begin{table}[t]
\centering
\caption{C2ST of different methods after 10,000 simulations. The values represent the mean and standard deviation.}
\label{tab:compare_snse}
\footnotesize
\begin{tabular}{@{}lccccc@{}}
\toprule
Task & Two-Moon & Lotka-Volterra & M/G/1 \\
\midrule
TGRR-MLMC& $0.5407 \pm 0.0315$ & $0.7472 \pm 0.0599$ & $0.6402 \pm 0.0210$ \\
SNSE & $0.5598 \pm 0.0673$ & $0.9153 \pm 0.0280$ & $0.8394 \pm 0.0116$ \\
\bottomrule
\end{tabular}
\end{table}

\begin{table}[tb]
\centering
\caption{Comparison of computational time costs (in minutes). The values represent the mean and standard deviation.}
\label{appendix:computation_cost}
\footnotesize
\begin{tabular}{@{}lccccc@{}}
\toprule
Methods/Tasks & Two-Moon & Lotka-Volterra & M/G/1  \\
\midrule
TGRR-MLMC & $299.35 \pm 28.96$ & $300.10 \pm 26.88$ & $294.65 \pm 18.67$ \\
SNSE &  $194.76 \pm 12.76$ &  $217.64 \pm 18.20$  & $243.80 \pm 17.28$  \\
\bottomrule
\end{tabular}
\end{table}

\section{Concluding remarks}\label{section:discuss}

We revisited APT by replacing the atomic heuristic with a nested formulation to allow rigorous convergence analysis of learning. We explore a family of MLMC nested estimators to accommodate different scenarios, as lower computational complexity alternatives to the single-level nested estimator. Our theory quantifies the orders of bias, variance, and average cost for the loss and gradient estimators. Numerical experiments confirm the effectiveness of the proposed methods for various tasks.

We acknowledge some limitations regarding the proposed methods. Firstly, they may suffer from excessive variance accompanied by high-dimensional problems and complex neural network structures, which lead to inferior performance to some extent. A possible way to reduce the variances of MLMC estimators is to use quasi-Monte Carlo (QMC) methods, which use low-discrepancy (intuitively, more uniform) points to generate samples from the proposal. QMC-based quadrature can yield a faster convergence rate than single-level Monte Carlo, and its application in nested MLMC methods has already been investigated in \cite{he2022unbiased}. Secondly,  MLMC takes more computational time than atomic APT due to the gradient calculation at multiple levels. How to speed up the algorithm of MLMC is left for future work. 

\appendix

\section{Proof for \Cref{theorem:var_delta_grad_query}}\label{appendix:pf_var_delta_grad_query}

We first recall a useful lemma for bounding moments and tail probability of iid sample mean, which was used in  \cite{Giles2017DecisionmakingUU}.

\begin{lemma}\label{lemma:key_lemma1}
Let $X$ be a real-valued random variable with mean zero, and let $\overline{X}_{N}$ be an average of $N$ iid samples of $X$. If for $\mathbb{E}[|X|^u] <\infty$ for $u\geq 2$, there exists a constant $C_u>0$ depending only on $u$ such that
\begin{align*}
\mathbb{E}\left[\left|\overline{X}_{N}\right|^{u}\right]\leq C_{u}\dfrac{\mathbb{E}[|X|^{u}]}{N^{u/2}},\quad \mathbb{P}\left[\left\lvert\overline{X}_N\right\rvert>c\right]\leq C_u\dfrac{\mathbb{E}\left[|X|^{u}\right]}{c^{u}N^{u/2}},
\end{align*}
for any $c>0$.
\end{lemma}
\begin{proof}[Proof of \Cref{theorem:var_delta_grad_query}]
The proof follows \cite{goda2022unbiased}. We first note that
\begin{align}
M_{\max}&:=\underset{x,\phi,\theta}{\sup} \|\nabla_\phi \log g_\phi(x,\theta)\|_{\infty}=\underset{x,\phi,\theta}{\sup} \frac{\|\nabla_\phi  g_\phi(x,\theta)\|_{\infty}}{g_\phi(x,\theta)}<\infty,\label{eq:mmax}\\
H(u)&:=\underset{\phi \in \Phi}{\sup}\ \mathbb{E}\left[\left(\frac{g_\phi(x,\theta)}{Z(x,\phi)}\right)^u\right]<\infty,\ \forall u\in (0,s].\notag
\end{align}

Define the event
\begin{align*}
    \mathcal{A}:=\left\{\left| S^{(a)}_{\ell-1}\right| >\frac{1}{2}\right\}\bigcup\left\{\left|S^{(b)}_{\ell-1}\right| >\frac12\right\},
\end{align*}
where $S^{(i)}_{\ell-1}:= g^{(i)}_{\phi,M_{\ell-1}}(x)/Z(x,\phi)-1$ with $i\in\{a,b\}$, which is a sample mean of $M_{\ell-1}$ iid copies of $ g_\phi(x,\theta)/Z(x,\phi)-1$ with zero mean. We also define
\begin{equation*}
S_{\ell} := \frac{g_{\phi,M_{\ell}}(x)}{Z(x,\phi)}-1=\frac{S^{(a)}_{\ell-1}+S^{(b)}_{\ell-1}}{2}.
\end{equation*}
Let $\mathcal{A}^c$ be the complement of the event $\mathcal{A}$. Thanks to the following decomposition
\begin{equation}\label{eq:2nd_delta_grad_query_dec}
    \mathbb{E}\left[\left\|\Delta \rho_{\phi,\ell}\right\|_2^2\right] = \mathbb{E}\left[\left\|\Delta \rho_{\phi,\ell}\right\|_2^2\mathbf{1}_{\mathcal{A}}\right] +\mathbb{E}\left[\left\|\Delta \rho_{\phi,\ell}\right\|_2^2\mathbf{1}_{\mathcal{A}^c}\right],
\end{equation}
it suffices to develop a proper upper bound for the two terms.

Let $\{a_i\}_{i=1}^d$ be any real finite sequence and $\{b_i\}_{i=1}^d$ be a positive finite sequence, for any $j>0$ we have $a_j \leq  b_j\  \underset{i}{\max}\left\{{a_i}/{b_i}\right\}$. Taking summation on both sides yields
\begin{align}\label{eq:frac_bound2}
   \frac{\sum_{j=1}^d a_j}{\sum_{j=1}^db_j} \leq \underset{i=1,\dots,d}{\max}\left\{\frac{a_i}{b_i}\right\}.
\end{align}
Using \eqref{eq:frac_bound2}, we have
\begin{align*}
    \left\|\nabla_\phi \log \hat{Z}_M(x,\phi)\right\|^2_2 &=\left\|\frac{\sum_{j=1}^{M} \nabla_\phi g_\phi(x,\theta_j')}{\sum_{j=1}^{M} g_\phi(x,\theta_j')}\right\|^2_2=\sum_{i=1}^d\left(\frac{\sum_{j=1}^{M} |\partial_{\phi_i} g_\phi(x,\theta_j')|}{\sum_{j=1}^{M} g_\phi(x,\theta_j')}\right)^2\\& \leq d\left\|\nabla_\phi \log g_\phi(x,\theta)\right\|_{\infty}^2\le dM_{\max}^2,
\end{align*}
where $\phi=(\phi_1,\dots,\phi_d)\in\mathbb{R}^d$.
As a result,
\begin{align*}
    \left\|\rho_{\phi,M}\right\|^2_2 &=\left\|-\nabla_\phi\log g_\phi(x,\theta)+\nabla_\phi \log \hat{Z}_M(x,\phi)\right\|^2_2\\&\leq 2\left\|\nabla_\phi \log g_\phi(x,\theta)\right\|^2_2+2\left\|\nabla_\phi \log \hat{Z}_M(x,\phi)\right\|^2_2\le 4dM_{\max}^2.
\end{align*}
This gives
\begin{align*}
    \max\left\{\left\|\rho^{(a)}_{\phi,M_{\ell-1}}\right\|_2^2,\left\|\rho^{(b)}_{\phi,M_{\ell-1}}\right\|_2^2\right\} \leq 4dM^2_{\max}.
\end{align*}
It then follows 
\begin{align}
\left\|\Delta \rho_{\phi,\ell}\right\|^2_2 &\leq \left(\left\|\rho_{\phi,M_\ell}\right\|_2+\frac{\left\|\rho^{(a)}_{\phi,M_{\ell-1}}\right\|_2}{2}+\frac{\left\|\rho^{(b)}_{\phi,M_{\ell-1}}\right\|_2}{2}\right)^2\notag\\
&\leq 3\left\|\rho_{\phi,M_\ell}\right\|_2^2+\frac{3}{4}\left\|\rho^{(a)}_{\phi,M_{\ell-1}}\right\|_2^2+\frac{3}{4}\left\|\rho^{(b)}_{\phi,M_{\ell-1}}\right\|_2^2 \leq 18dM^2_{\max}.\label{eq:deltarho}
\end{align}
For any $u\in[2,s]$, applying \Cref{lemma:key_lemma1} gives
\begin{align}\label{eq:pA}
\mathbb{P}\left[\mathcal{A}\right]&\leq \mathbb{P}\left[\left| S^{(a)}_{\ell-1}\right| >\frac{1}{2}\right]+\mathbb{P}\left[\left| S^{(b)}_{\ell-1}\right| >\frac{1}{2}\right] \notag
\\&\leq \frac{2C_{u}}{2^{-u}M_{\ell-1}^{u/2}}\mbe[|g_\phi(x,\theta)/Z(x,\phi)-1|^{u}] \notag\\
&\leq \frac{2^{3u/2+1}C_{u}}{M_{\ell}^{u/2}}(\mbe[(g_\phi(x,\theta)/Z(x,\phi))^{u}]+1) \notag\\&\leq \frac{2^{3u/2+1}C_u}{M_{\ell}^{u/2}}(H(u)+1).
\end{align} 
Then for the first term in the right hand side (RHS) of \eqref{eq:2nd_delta_grad_query_dec}, we have 
\begin{align}\label{eq:bound_event_A}
    \mathbb{E}\left[\left\|\Delta \rho_{\phi,\ell}\right\|_2^2\mathbf{1}_{\mathcal{A}}\right]\leq 18dM^2_{\max}\mathbb{P}\left[\mathcal{A}\right]=\mco\left(M_\ell^{-s/2}\right).
\end{align}

For the second term  $\mathbb{E}[\|\Delta \rho_{\phi,\ell}\|_2^2\mathbf{1}_{\mathcal{A}^c}]$ in the RHS of \eqref{eq:2nd_delta_grad_query_dec}, we utilize the antithetic property to obtain the following identity
\begin{align}
    \Delta \rho_{\phi,\ell}
    &=\frac{1}{2}\tilde F_{\ell-1}^{(a)}S_{\ell-1}^{(a)}+\frac{1}{2}\tilde F_{\ell-1}^{(b)}S_{\ell-1}^{(b)}-\tilde F_{\ell}S_{\ell}\notag\\
    &\quad+\frac{1}{2}\frac{\nabla_\phi Z(x,\phi)}{ g^{(a)}_{\phi,M_{\ell-1}}(x)}\left(S_{\ell-1}^{(a)}\right)^2
    +\frac{1}{2}\frac{\nabla_\phi Z(x,\phi)}{ g^{(b)}_{\phi,M_{\ell-1}}(x)}\left(S_{\ell-1}^{(b)}\right)^2-\frac{\nabla_\phi Z(x,\phi)}{ g_{\phi,M_{\ell}}(x)}\left(S_{\ell}\right)^2,\label{eq:dec_delta_grad_query}
\end{align}
where
\begin{align*}
    \tilde{F}^{(i)}_{\ell-1}&:= \left(\nabla_{\phi} g^{(i)}_{\phi,M_{\ell-1}}(x)-\nabla_\phi Z(x,\phi)\right)/g^{(i)}_{\phi,M_{\ell-1}}(x),\ i\in\{a,b\},\\
    \tilde{F}_{\ell}&:= \left(\nabla_{\phi} g_{\phi,M_{\ell}}(x)-\nabla_\phi Z(x,\phi)\right)/g_{\phi,M_{\ell}}(x).
\end{align*}
We denote 
\begin{align*}
F^{(i)}_{\ell-1}&:= \left(\nabla g^{(i)}_{\phi,M_{\ell-1}}(x)-\nabla_\phi Z(x,\phi)\right)/Z(x,\phi),\ i\in\{a,b\},\\
F_{\ell}&:= \left(\nabla g_{\phi,M_{\ell}}(x)-\nabla_\phi Z(x,\phi)\right)/Z(x,\phi),
\end{align*}
which are sample means of the random variable
\begin{equation*}
\mathfrak{f} := \left(\nabla_\phi g_\phi(x,\theta)-\nabla_\phi Z(x,\phi)\right)/Z(x,\phi)
\end{equation*}
with zero mean. Directly applying Jensen's inequality on \eqref{eq:dec_delta_grad_query} gives
\begin{align*}
&\left\| \Delta \rho_{\phi,\ell}\right\|^2_2\leq 2\|\tilde{F}^{(a)}_{\ell-1}\|_2^2(S^{(a)}_{\ell-1})^2+2\|\tilde{F}^{(b)}_{\ell-1}\|_2^2(S^{(b)}_{\ell-1})^2+4\|\tilde{F}_{\ell}\|_2^2(S_{\ell})^2\\
&+ 2\left\|\frac{\nabla_\phi Z(x,\phi)}{g^{(a)}_{\phi,M_{\ell-1}}(x)}\right\|_2^2(S^{(a)}_{\ell-1})^4+2\left\|\frac{\nabla_\phi Z(x,\phi)}{g^{(b)}_{\phi,M_{\ell-1}}(x)}\right\|_2^2(S^{(b)}_{\ell-1})^4+4\left\|\frac{\nabla_\phi Z(x,\phi)}{g_{\phi,M_{\ell}}(x)}\right\|_2^2(S_{\ell})^4.
\end{align*}

Now on event $\mathcal{A}^c$, we have $1/{g^{(a)}_{\phi,M_{\ell-1}}(x)}\leq 2/{Z(x,\phi)}, 1/{g^{(b)}_{\phi,M_{\ell-1}}(x)}\leq 2/{Z(x,\phi)}$. With simple algebra, we have $1/{g_{\phi,M_{\ell}}(x)}\leq 2/{Z(x,\phi)}$. Also we have $|S_\ell| \le 1/2$ on $\mathcal{A}^c$. With the upper bound derived before, on event $\mathcal{A}^c$, we arrive at the following:
\begin{align}
\left\| \Delta \rho_{\phi,\ell}\right\|^2_2 &\leq 8\|F^{(a)}_{\ell-1}\|_2^2(S^{(a)}_{\ell-1})^2+8\|F^{(b)}_{\ell-1}\|_2^2(S^{(b)}_{\ell-1})^2+16\|F_{\ell}\|_2^2(S_{\ell})^2\notag \\
&\quad + 8\left\|\frac{\nabla_\phi Z(x,\phi)}{Z(x,\phi)}\right\|_2^2\left((S^{(a)}_{\ell-1})^4+(S^{(b)}_{\ell-1})^4+2(S_{\ell})^4\right).\label{eq:bdd_norm_delta_grad_query}
\end{align}
Applying Jensen's inequality for any $u>0$ yields
\begin{align}\label{eq:grad_normalize_upp}
    \left\|\frac{\nabla_\phi Z(x,\phi)}{Z(x,\phi)}\right\|_{u}^{u} = \frac{\left\|\mathbb{E} [ \nabla_\phi g_\phi(x,\theta)|x]\right\|_{u}^{u}}{Z(x,\phi)^u}\le\frac{dM_{\max}^u\mathbb{E}_[ g_\phi(x,\theta)|x]^u}{Z(x,\phi)^u}=dM_{\max}^u.
\end{align}
By setting $u=2$, it is sufficient to derive an upper bound for the expectation of the $\|F_{\ell}\|_2^2(S_{\ell})^2$ and $(S_\ell)^4$ in \eqref{eq:bdd_norm_delta_grad_query} on event $\mathcal{A}^c$, since directly setting $\ell$ to $\ell-1$ gives the upper bound for the rest of the terms. By \eqref{eq:mmax} and \eqref{eq:grad_normalize_upp}, we have
\begin{align}\label{eq:var_f_bound}
\mathbb{E}\left[\|\mathfrak{f}\|^{u}_{u}\right] &= \mathbb{E}\left[\left\|\frac{\nabla_\phi g_\phi(x,\theta)-\nabla_\phi Z(x,\phi)}{Z(x,\phi)}\right\|^{u}_{u}\right] \notag\\
&\leq 2^{u-1}\mathbb{E}\left[\left\|\frac{\nabla_\phi g_{\phi}(x,\theta)}{Z(x,\phi)}\right\|_{u}^{u}+\left\|\frac{\nabla_\phi Z(x,\phi)}{Z(x,\phi)}\right\|_{u}^{u}\right] \notag\\
&\leq 2^{u-1}dM_{\max}^u\left(\mathbb{E}\left[\left(\frac{g_{\phi}(x,\theta)}{Z(x,\phi)}\right)^{u}\right]+1\right) \notag\\
&\le d2^{u-1}M_{\max}^u(H(u)+1)
.
\end{align}

If $|x-1|\le 1/2$, we then have for any $\tau\in (0,4]$,
\begin{align}\label{eq:exp_bound}
    |x-1|^\tau \leq 2^{\min(s,4)-4}\left|x-1\right|^{\min(s,4)-(4-\tau)}=2^{2r_2-4}\left|x-1\right|^{2r_2-(4-\tau)},
\end{align}
where $r_2=\min(s,4)/2$. Applying \eqref{eq:exp_bound} with $\tau=2$ and H\"older inequality  obtains
\begin{align}
\mathbb{E}\left[\left\|F_\ell\right\|_2^2\left(S_\ell\right)^2\textbf{1}_{\mathcal{A}^c}\right]&\leq \mathbb{E}\left[\left\|F_\ell\right\|_2^2 \times2^{2r_2-4}\left|S_\ell\right|^{{2r_2-2}}\right]\notag \\
&\leq 2^{2r_2-4}\left(\mathbb{E}\left[\left\|F_\ell\right\|_2^{2r_2}\right]\right)^{1/r_2} \left(\mathbb{E}\left[\left|S_\ell\right|^{{2r_2}}\right]\right)^{1-1/r_2}.\label{eq:3rd_term_grad_query}
\end{align}
To obtain the desired upper bound, we handle these two moments separately. Since $\mathbb{E}\left[\|\mathfrak{f}\|^{u}_{u}\right]$ is bounded by \eqref{eq:var_f_bound}, applying Jensen's inequality with \Cref{lemma:key_lemma1} gives
\begin{equation}\label{eq:fell}
\mathbb{E}\left[\left\|F_\ell\right\|_2^{u}\right] \leq \frac{d^{u/2-1}C_{u}}{M_\ell^{u/2}}\mathbb{E}\left[\|\mathfrak{f}\|^{u}_{u}\right]\le\frac{d^{u/2}2^{u-1}C_{u}M_{\max}^{u}(H(u)+1)}{M_\ell^{u/2}}.
\end{equation}
We directly apply \Cref{lemma:key_lemma1} to have
\begin{equation}\label{eq:sell}
\mathbb{E}\left[\left|S_\ell\right|^{{u}}\right]\leq \frac{C_{u}}{M_\ell^{u/2}}\mbe[|g_\phi(x,\theta)/Z(x,\phi)-1|^{u}]\le \frac{C_{u}}{M_\ell^{u/2}}(H(u)+1).
\end{equation}
Using \eqref{eq:3rd_term_grad_query}, \eqref{eq:fell} and \eqref{eq:sell} gives
\begin{align*}
\mathbb{E}\left[\left\|F_\ell\right\|_2^2\left(S_\ell\right)^2\textbf{1}_{\mathcal{A}^c}\right]=\mco(M_\ell^{-r_2}).
\end{align*}
With similar treatment for $\mathbb{E}[\|F^{(a)}_{\ell-1}\|_2^2(S^{(a)}_{\ell-1})^2 \mathbf{1}_{\mathcal{A}^c}]$ and $\mathbb{E}[\|F^{(b)}_{\ell-1}\|_2^2(S^{(b)}_{\ell-1})^2 \mathbf{1}_{\mathcal{A}^c}]$, we then have an upper bound of $\mco(M_\ell^{-r_2})$ for them. 

Applying \eqref{eq:exp_bound} with $\tau=4$, we have
\begin{equation*}
\mathbb{E}[(S_{\ell})^4 \mathbf{1}_{\mathcal{A}^c}]\leq 2^{2r_2-4}\mathbb{E}\left[\left|S_\ell\right|^{2r_2}\right]\leq 2^{2r_2-4}\frac{C_{2r_2}}{M_\ell^{r_2}}(H(2r_2)+1).
\end{equation*}
Similarly, $\mathbb{E}[(S_{\ell-1}^{(a)})^4 \mathbf{1}_{\mathcal{A}^c}]$ and $\mathbb{E}[(S_{\ell-1}^{(b)})^4 \mathbf{1}_{\mathcal{A}^c}]$ have an upper 
 of bound $\mco(M_\ell^{-r_2})$.
Combining these upper bounds, we arrive at
\begin{equation*}
\mathbb{E}\left[\left\|\Delta \rho_{\phi,\ell}\right\|_2^2\right] = \mco(M_\ell^{-s/2})+\mco(M_\ell^{-r_2})=\mco(M_\ell^{-r_2}).
\end{equation*}

The same decomposition on event $\mathcal{A}$ is also studied for $\mathbb{E}\left[\left\|\Delta \rho_{\phi,\ell}\right\|_2\right]$, yielding
\begin{align}
\mathbb{E}\left[\left\|\Delta \rho_{\phi,\ell}\right\|_2\right] = \mathbb{E}\left[\left\|\Delta \rho_{\phi,\ell}\right\|_2\mathbf{1}_{\mathcal{A}}\right] +\mathbb{E}\left[\left\|\Delta \rho_{\phi,\ell}\right\|_2\mathbf{1}_{\mathcal{A}^c}\right].\label{eq:deltarho1}
\end{align}
Using \eqref{eq:deltarho} and \eqref{eq:pA} with $u=2$, we have
\begin{equation*}
\mathbb{E}\left[\left\|\Delta \rho_{\phi,\ell}\right\|_2\mathbf{1}_{\mathcal{A}}\right]=\mco(M_\ell^{-1}).
\end{equation*}
It follows from \eqref{eq:dec_delta_grad_query} that on event $\mathcal{A}^c$, 
\begin{align}
\left\|\Delta \rho_{\phi,\ell}\right\|_2
        &\leq \|F^{(a)}_{\ell-1}\|_2|S^{(a)}_{\ell-1}|+\|F^{(b)}_{\ell-1}\|_2|S^{(b)}_{\ell-1}|+2\|F_{\ell}\|_2|S_\ell|\notag\\
 &+ \left\|\frac{\nabla_\phi Z(x,\phi)}{Z(x,\phi)}\right\|_2\left((S^{(a)}_{\ell-1})^2+(S^{(b)}_{\ell-1})^2+2(S_\ell)^2\right).
\end{align}
For the upper bound of $\mathbb{E}\left[\left\|\Delta \rho_{\phi,\ell}\right\|_2\mathbf{1}_{\mathcal{A}^c}\right]$, we can similarly apply \Cref{lemma:key_lemma1} to derive an upper bound of $\mco(M_\ell^{-1})$ for $\mathbb{E}\left[(S^{(a)}_{\ell-1})^2\mathbf{1}_{\mathcal{A}^c}\right]$, $\mathbb{E}\left[(S^{(b)}_{\ell-1})^2\mathbf{1}_{\mathcal{A}^c}\right]$ and $\mathbb{E}\left[(S_{\ell})^2\mathbf{1}_{\mathcal{A}^c}\right]$. Applying H\"{o}lder's inequality and using \eqref{eq:fell} and \eqref{eq:sell} with $u=2$ yield 
\begin{equation*}
\mathbb{E}\left[\left\|F_\ell\right\|_{2}\left|S_\ell\right| \mathbf{1}_{\mathcal{A}^{c}}\right] 
\leq \left(\mathbb{E}\left[\left\|F_\ell\right\|_2^2\right]\right)^{1/2}\left(\mathbb{E}\left[S_\ell^2\right]\right)^{1/2}=\mco(M_\ell^{-1}).
\end{equation*}
Finally, we have $\mathbb{E}\left[\left\|\Delta \rho_{\phi,\ell}\right\|_2\right]=\mco(M_\ell^{-1}).$
\end{proof}

\section{Normalizing flows as a density estimator}\label{appendix:nsf_intro}

Normalizing flow model maps noise $\mathbf{x}\sim \pi$ to data $\mathbf{y}$ through an invertible and differentiable transformation $f$, that is $\mathbf{y} = f(\mathbf{x})$. Then the probability density of $\mathbf{y}$ under the flow is obtained with a change of variables:
\begin{equation*}
p(\mathbf{y})=\pi(f^{-1}(\mathbf{y}))\left|\mathrm{\det}\left(  \frac{\partial f^{-1}}{\partial \mathbf{y}} \right) \right|.
\end{equation*}
Therefore, it is determined by the bijection $f$ and its inverse $f^{-1}$, which are often implemented by composing a series of invertible neural-network modules. 
Neural spline flows~\cite{durkan2019neural} used in our paper is built upon coupling layers, it %
maps $\mathbf{x}$ to $\mathbf{y}$ by using monotonic rational-quadratic splines, where each interval is defined by a monotonic rational-quadratic function taking the form of a ratio of two quadratic polynomials. These functions are easy to differentiate and, when restricted to monotonic segments, they are also analytically invertible. Moreover, both the derivative and the height at each knot are strictly more flexible. The spline itself maps the interval $[-B, B]$ to $[-B, B]$, and outside this range, the transformation is defined as the identity, resulting in a linear `tail' that allows the overall transformation to accept unconstrained inputs.

The spline uses $K$ different rational-quadratic functions, with boundaries set by $K+1$ spline knots $\{u^{(k)}, v^{(k)}\}_{k=0}^{K}$, with boundaries given by the coordinates $(u^{(0)}, v^{(0)}) = (-B, -B)$ and $(u^{(K)}, v^{(K)}) = (B, B)$. The remaining knots must satisfy the monotonicity conditions 
\begin{equation*}
u^{(k)} < u^{(k+1)}, \quad v^{(k)} < v^{(k+1)}.
\end{equation*}
We specify the spline using $K-1$ arbitrary positive derivative values $\{\delta^{(k)}\}_{k=1}^{K-1}$ at the $K-1$ interior knots $(u^{(k)}, v^{(k)})_{k=1}^{K-1}$, and set the boundary derivatives to $\delta^{(0)} = \delta^{(K)} = 1$ in order to meet the linear tails. 

The coupling transformation $\Psi$ maps the input $\mathbf{x}\in \mathbb{R}^D$ to the output $\mathbf{y}\in \mathbb{R}^D$ as follows:
\begin{enumerate}
    \item Split the input $\mathbf{x}$ into two parts, $\mathbf{x} = [\mathbf{x}_{1:D'-1}, \mathbf{x}_{D':D}]$, where $D'<D$.
    \item A neural network is used to take $\mathbf{x}_{1:D'-1}$ as input and output an unconstrained parameter vector $\phi_i$ of length $3K - 1$ for each $i = D', \ldots, D$.
    \item Compute $\mathbf{y}_i = \Gamma_{\phi_i}(\mathbf{x}_i)$ in parallel for $i = D', \dots, D$, where $\Gamma_{\phi_i}$ is an invertible function parameterized by $\phi_i = [\phi_i^w, \phi_i^h, \phi_i^d]$, where $\phi_i^w$ and $\phi_i^h$ are of length $K$, and $\phi_i^d$ is of length $K - 1$, vectors $\phi_i^w$ and $\phi_i^h$ are used to obtain the widths and heights of $K$ intervals. The cumulative sums of these widths and heights define the $K + 1$ knot points $\{(u^{(k)}, v^{(k)})\}_{k=0}^{K}$. Vector $\phi_i^d$ is obtained the derivative values $\{\delta^{(k)}\}_{k=1}^{K-1}$ at the interior knots.
    \item Set $\mathbf{y}_{1:D'-1} = \mathbf{x}_{1:D'-1}$ and return $\mathbf{y} = [\mathbf{y}_{1:D'-1}, \mathbf{y}_{D':D}]$.
\end{enumerate}

For the monotonic rational-quadratic function in the $k^{\mathrm{th}}$ bin, $w^{(k)} = u^{(k+1)} - u^{(k)}$ denotes the width of the interval, and  $s^{(k)} = (v^{(k+1)} - v^{(k)}) / w^{(k)}$ denotes the slope of the line connecting the coordinates. For $u \in [u^{(k)}, u^{(k+1)}]$,   the monotone invertible mapping is defined by
\begin{equation*}
    \Gamma_{\phi_i}(u) = v^{(k)} + \frac{(v^{(k+1)} - v^{(k)}) [s^{(k)} \xi^2 + \delta^{(k)} \xi (1 - \xi)]}{s^{(k)} + [\delta^{(k+1)} + \delta^{(k)} - 2 s^{(k)}] \xi (1 - \xi)},
\end{equation*}
where  $\xi = (u - u^{(k)}) / w^{(k)}\in [0, 1]$, $k = 0, \dots, K-1$.

\begin{figure}[htbp]
\begin{subfigure}[t]{0.05\textwidth}
  \vspace{2pt}
    \textbf{A}
\end{subfigure}
\hspace{0.025\textwidth}
\begin{subfigure}[t]{0.80\textwidth}
  \vspace{0pt}
    \includegraphics[width=\textwidth]{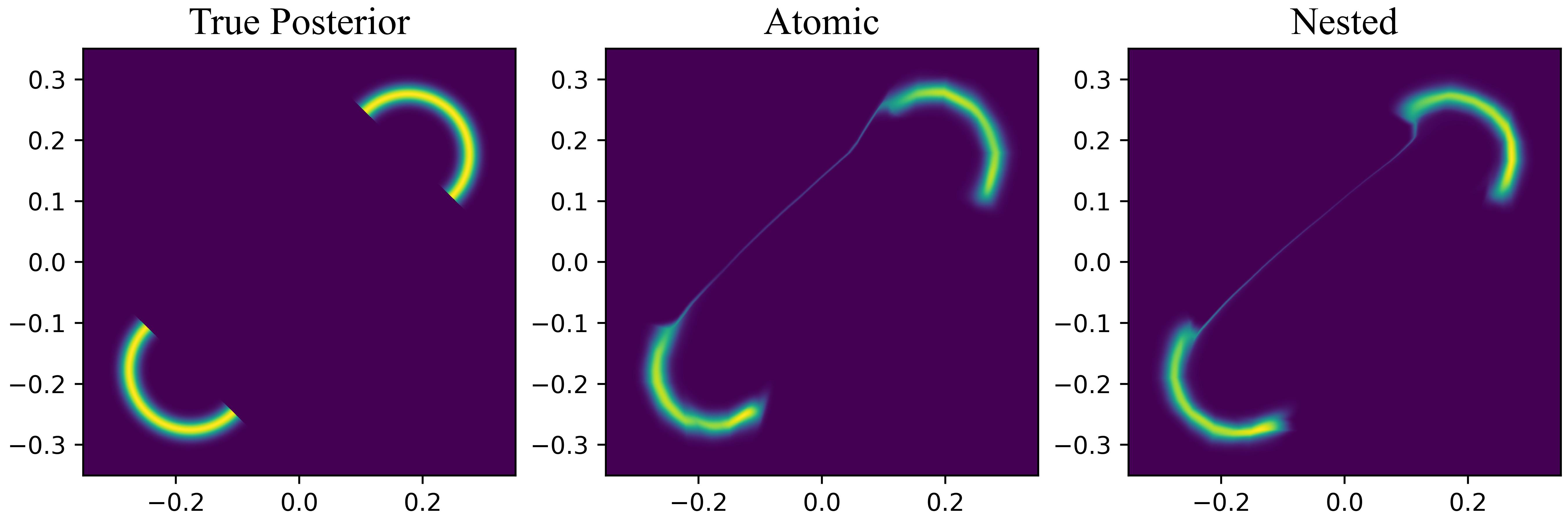}
\end{subfigure}

\begin{subfigure}[t]{0.05\textwidth}
  \vspace{1pt}
    \textbf{B}  
\end{subfigure}
\begin{subfigure}[t]{0.90\textwidth}
  \vspace{0pt}
    \includegraphics[width=\textwidth]{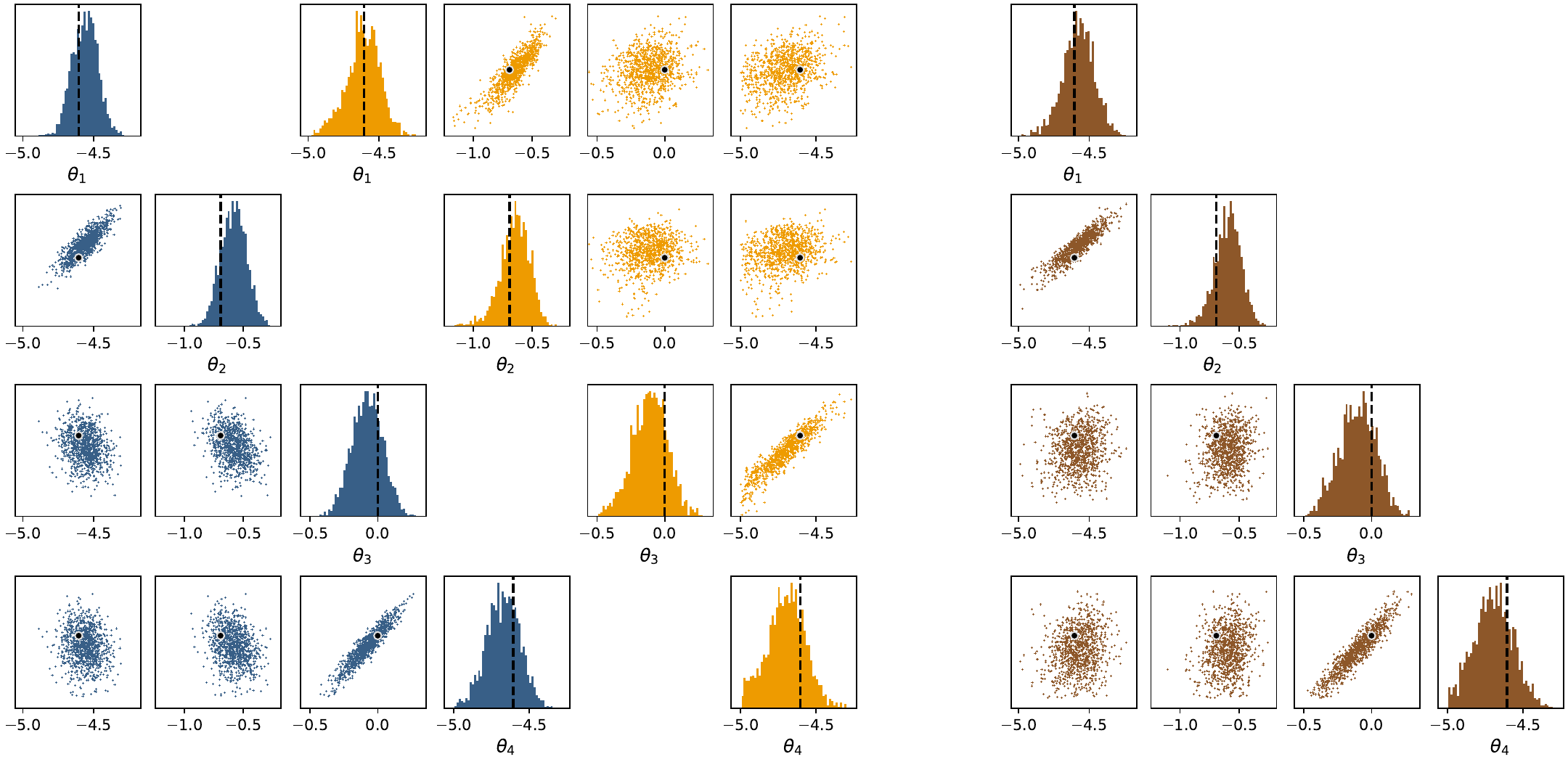}
\end{subfigure}

\begin{subfigure}[t]{0.05\textwidth}
  \vspace{1pt}
    \textbf{C}  
\end{subfigure}
\begin{subfigure}[t]{0.90\textwidth}
  \vspace{0pt}
    \includegraphics[width=\textwidth]{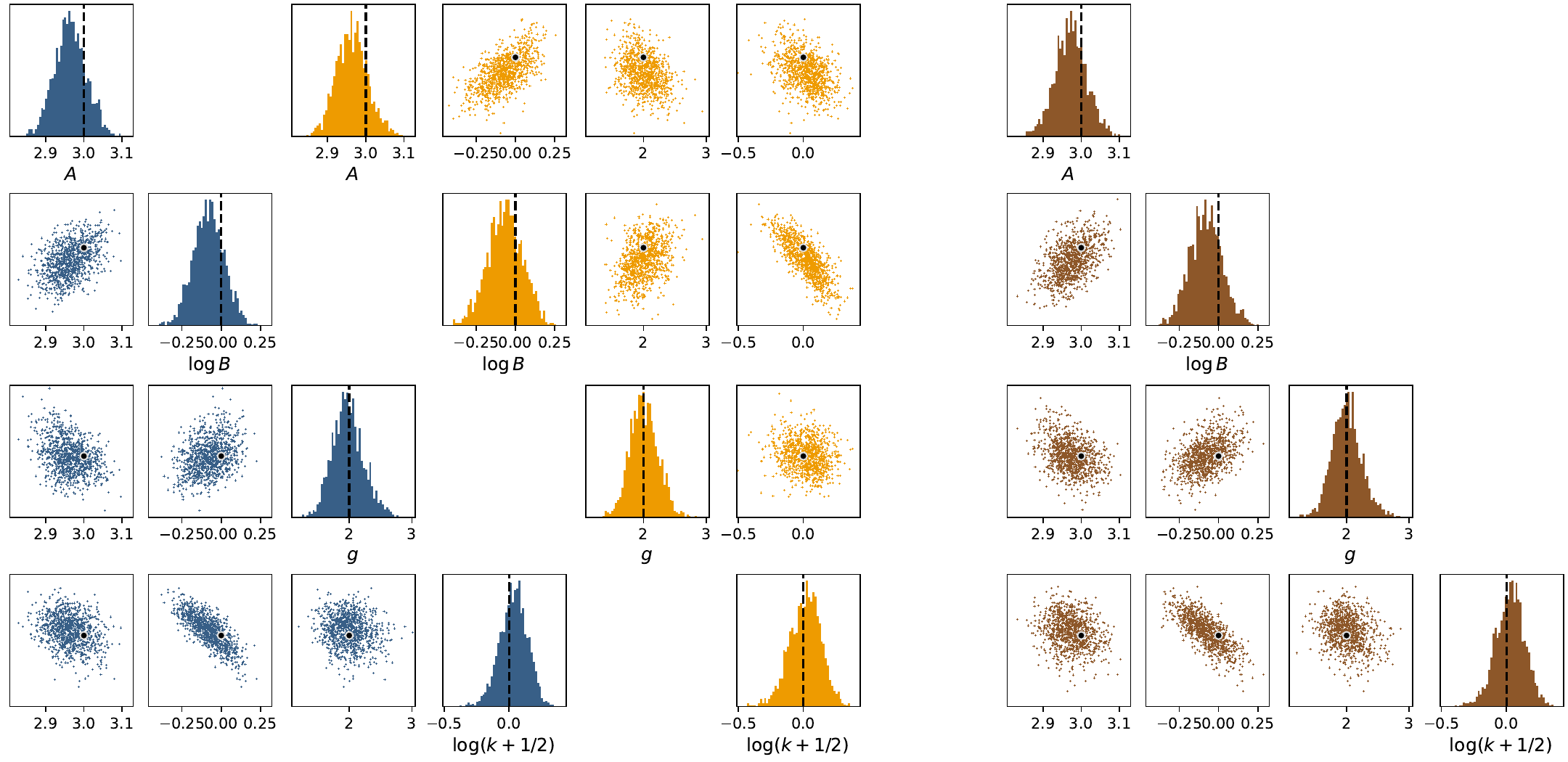}
\end{subfigure}
\caption{\textbf{Approximated posterior for nested APT and atomic APT.} \textbf{A.} Two-Moon model, from left to right: available ground truth, atomic APT with inner samples $M = 100$, nested APT with inner samples $M=100$. \textbf{B.} Lotka-Volterra model, from left to right: ground truth simulated with SMC-ABC \cite{beaumont2009adaptive}, atomic APT with $M = 100$, nested APT with $M=100$. \textbf{C.} M/G/1 queue model, the settings are the same as Lotka-Volterra. We have plotted the scatter plot for each of the 2D subspaces. The histogram for each $\theta_{i},i=1,\dots,4$ is plotted on the diagonal with the ground truth parameter marked with dotted lines.}
\label{fig:nested_compare}
\vspace{-1.5em}
\end{figure}

\section{Visualization of nested APT and atomic APT}\label{appendix:exp_nested_atomic_APT}

Under the settings in \Cref{sec:UBAPT_exp}, \Cref{fig:nested_compare} compares the approximated posteriors for nested APT and atomic APT. Both methods yield a comparable posterior estimation to the benchmark SMC-ABC. A noteworthy advantage of the proposed nested APT method is that it enables the application of a series of existing results for an optimizer with a biased gradient.

\section{Ablation studies for $\alpha$}\label{app:alpha_ablation}

\Cref{appendix:alpha_ablation_table} shows ablation studies of $\alpha$ for TGRR-MLMC, where $\alpha=1.673$ is our proposed optimal value, $\alpha=1.8$ is the value that minimizes the expected cost. We observe that $\alpha=1.673$ exhibits a slightly better performance than $\alpha=1.8$ for the Lotka-Volterra model.

\begin{figure}[htbp]
    \centering
    \includegraphics[width=1.0\linewidth]{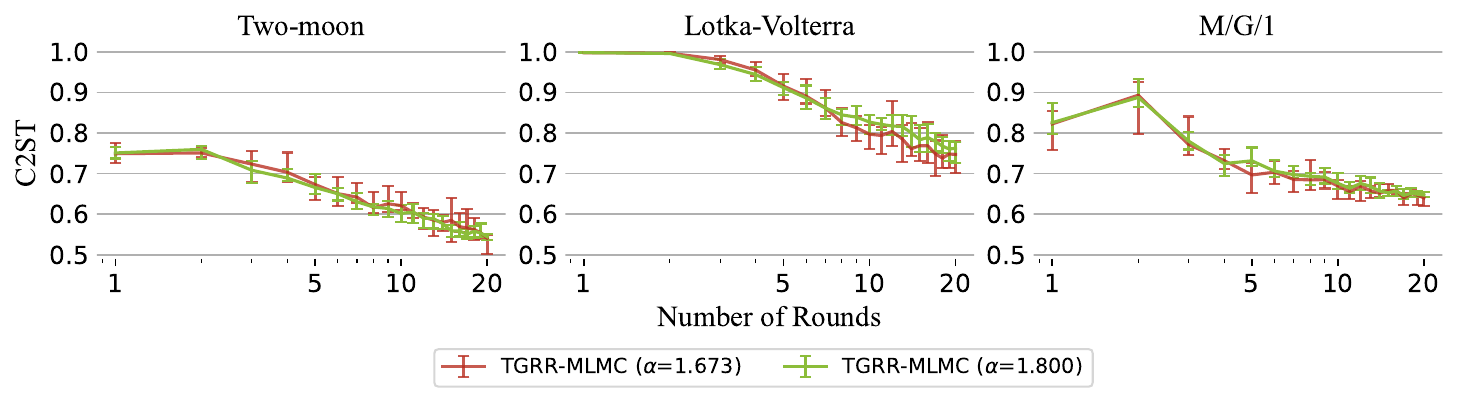}
    \caption{Ablation studies of $\alpha$ for TGRR-MLMC, where $\alpha=1.673$ is our proposed optimal value, $\alpha=1.8$ is the value that minimizes the expected cost.}
    \label{appendix:alpha_ablation_table}
    \vspace{-1em}
\end{figure}

\section{Inference on high-dimensional neuron models}\label{app:HH_model}
We modeled a single-compartment Hodgkin-Huxley type neuron based on the equations in \cite{pospischil2008minimal}, where the dynamics of the membrane potential $V$ and channel gating variables $q \in \{m, h, n, p\}$ are governed by:
\begin{align*}
C_m \frac{dV}{dt} &= g_l (E_L - V) + g_{Na} m^3 h (E_{Na} - V) + g_K n^4 (E_K - V) \\ & \quad + g_M p (E_K - V) + I_{\text{inj}} + \sigma \eta(t), \\
\frac{dq}{dt} &= q_\infty(V) - q / \tau_q(V), \quad q \in \{m, h, n, p\},
\end{align*}
where $C_m$ is the membrane capacitance, $g_L$ is the leak conductance, $E_L$ is the leak reversal potential, and $g_c$ is the density of channels for ion types $c \in \{Na^+, K^+, M\}$. The gating variables $m, h, n, p$ govern the activation and inactivation of the respective channels, and their kinetics are dependent on the membrane potential $V$. The injected current $I_{\text{inj}}$ and Gaussian noise $\sigma \eta(t)$ drive the neuron dynamics. The steady-state functions $q_\infty(V)$ and the time constants $\tau_q(V)$ characterize the gating kinetics, with additional parameters controlling the spike threshold $V_T$ and the time constant of adaptation $\tau_p(V)$.

Following \cite{gonccalves2020training}, we set $E_{Na} = 53 \, \text{mV}$ and $E_K = -107 \, \text{mV}$. The parameters we aim to infer include: the maximal conductances $g_{Na}, g_K, g_l, g_M$, the membrane threshold $V_T$, the leak reversal potential $E_l$, the maximum time constant $t_{\max}$, and the noise parameter $\text{noise}$, corresponding to the vector $\theta\in\mathbb{R}^8$ in the code implementation. 

\begin{figure}[htbp]
    \centering
    \includegraphics[width=0.8\linewidth]{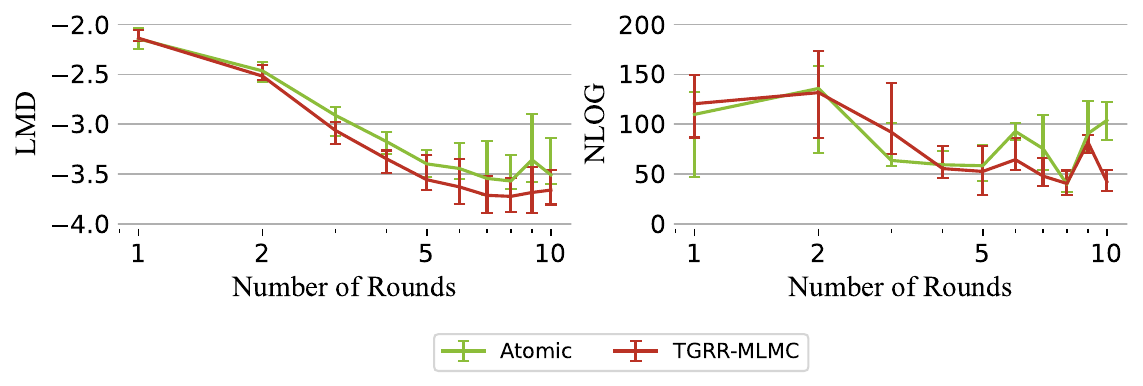}
    \caption{The comparison between atomic APT and TGRR-MLMC on Hodgkin-Huxley, measured with LMD and NLOG since ground truth posterior samples are inaccessible.}
    \label{appendix:hh_model}
    \vspace{-1em}
\end{figure}

\Cref{appendix:hh_model} presents the LMD and NLOG values of the posterior distributions derived from the atomic APT and TGRR-MLMC methods over 10 rounds (10,000 simulations per round). In this high-dimensional task, we observe that the MLMC-based approach is able to improve performance, while atomic APT yields larger LMDs. However, we also find that  MLMC  requires more computational time due to the gradient calculation at multiple levels. The computational cost measured in minutes is $205.77 \pm 18.38$ for atomic APT, and $379.10 \pm 48.71$ for TGRR-MLMC. 

\bibliography{refs}

\end{document}